\newcommand{\jo}[1]{}
\newcommand{\si}[1]{#1}
\newtheorem{theorem}{Theorem}
\newtheorem{lemma}{Lemma}
\newtheorem{proposition}{Proposition}
\newtheorem{algorithm}{Algorithm}
\theoremstyle{definition}
\newtheorem{definition}{Definition}
\theoremstyle{remark}
\newtheorem{remark}{Remark}
\DeclareMathAlphabet{\mathpzc}{OT1}{pzc}{m}{it}
\newcommand{\R}{{\mathbb R}}
\journalname{Mathematical Programming}
\begin{document}
\title{Optimality condition and complexity analysis for linearly-constrained optimization without differentiability on the boundary
}

\jo{
\titlerunning{Optimality condition and complexity analysis for non-smooth optimization}        
}

\jo{
\author{
Gabriel Haeser  \and
Hongcheng Liu        \and
        Yinyu Ye 
}
}

\si{
\author{
G. Haeser\thanks{
Department of Applied Mathematics, University of S\~ao Paulo, S\~ao Paulo SP, Brazil. Visiting Scholar at Department of Management Science and Engineering, Stanford University, Stanford CA 94305, USA. E-mail: ghaeser@ime.usp.br.}
\and
Hongcheng Liu\thanks{Department of Radiation Oncology, Stanford University, Stanford CA 94305, USA. E-mail: hql5143liu@gmail.com}
              \and
              Yinyu Ye\thanks{Department of Management Science and Engineering, Stanford University, Stanford CA 94305, USA. E-mail: yinyu-ye@stanford.edu}
}
}


\jo{
\institute{G. Haeser \Letter \at
Department of Applied Mathematics, 
\\Institute of
    Mathematics and Statistics, University of S\~ao Paulo, S\~ao Paulo SP,
    Brazil. \\
 \email{ghaeser@ime.usp.br} 
 \and
H. Liu \at
                 Department of Radiation Oncology, 
              \\Stanford University, Stanford, CA 94305, USA\\
              \email{hql5143liu@gmail.com}           
              \and
              Y. Ye \at
              Department of Management Science and Engineering, \\
              Stanford University, Stanford, CA 94305, USA\\
              \email{yinyu-ye@stanford.edu}     
}
}

\jo{
\date{Received: date / Accepted: date}
}
\si{
\date{\today}
}

\maketitle

\begin{abstract}
In this paper we consider the minimization of a continuous function that is potentially not differentiable or not twice differentiable on the boundary of the feasible region. By exploiting an interior point technique, we present first- and second-order optimality conditions for this problem that reduces to classical ones when the derivative on the boundary is available. For this type of problems, existing necessary conditions often rely on the notion of subdifferential or become non-trivially weaker than the KKT condition in the (twice-)differentiable counterpart problems. In contrast, this paper  presents a new set of first- and second-order necessary conditions that are derived without the use of subdifferential and reduces to exactly the KKT condition when (twice-)differentiability holds.  As a result, these conditions are stronger than some existing ones considered for the discussed minimization  problem when only non-negativity constraints are present. To solve for these optimality conditions in the special but important case of linearly constrained problems, we  present two novel interior trust-region point algorithms and show that their worst-case computational efficiency in achieving the potentially  stronger optimality conditions match the best known complexity bounds. Since this work considers a more general problem  than the literature, our results also indicate that best known complexity bounds hold for a wider class of nonlinear programming problems.



\jo{
\keywords{Constrained optimization\and Nonconvex programming \and Interior point method  \and First order algorithm \and Nonsmooth problems}
}
\si{
{\bf Keywords:} Constrained optimization, Nonconvex programming, Interior point method, First order algorithm, Nonsmooth problems
}
\jo{
 \subclass{90C30 \and 90C51 \and 90C60 \and 68Q25}
  
 }
\end{abstract}

\section{Introduction}

In this paper we are interested in the problem

\begin{equation}
\label{linp}
\begin{array}{ll}\mbox{Minimize}&f(x),\\
\mbox{subject to}&\mathbf Ax=\mathbf b, x\geq 0,\end{array}\end{equation}
where $\mathbf A\in\R^{m\times n}$ and $f:\R^n_+\to\R$ is a continuous function on $\R^n_+:=\{x\in\R^n\mid x\geq 0\}$ and smooth on $\R^n_{++}:=\{x\in\R^n\mid x>0\}$. As a special case of \eqref{linp}, the following formulation has been popularly studied:
\begin{equation}
\label{boxp}\begin{array}{ll}\mbox{Minimize }&H(x)+\lambda\sum_{i=1}^n\varphi(x_i^p),\\
\mbox{subject to }&x\geq0,\end{array}\end{equation}
where $H$ is smooth, $\varphi$ is convex, $\lambda>0$ and $0<p<1$. A common use of \eqref{boxp} (or its immediate reformulations) is the problem of high-dimensional learning under the assumption of sparsity. In such a problem, few data observations are acquired for the task of recovering a high-dimension signal. Such a task is often done by minimizing an in-sample statistical loss (a.k.a., fidelity) function  $H(x)$ that represents the in-sample  error plus a regularization function $\lambda\sum_{i=1}^n \varphi(x_i^p)$, which penalizes non-zero variables to induce sparsity. Theoretical and numerical studies on the efficacies of this type of models are presented in \cite{Negahbanetal,liu,FanandLi2001,FanandLv2011,FanLvQireview,Fanetal2012,LohandWainwright,Wangetal2013ultrahigh,Wangetal2013}. Particularly, it is shown by \cite{liu,LohandWainwright,Wangetal2013,Fanetal2012,FanandLv2011,Wangetal2013ultrahigh} that to achieve a sound recovery quality, global optimality to \eqref{linp} is not necessary, but some local minima or even stationary points can successfully recover the high-dimensional signal with high probability. In specific, \cite{liu} shows that solutions satisfying a second-order necessary condition in  linear regression penalized by certain nonconvex $\varphi(x_i^p)$ have very desirable statistical properties.  \cite{deepcompression} presented a recent application of \eqref{boxp} in designing neural networks for deep learning, for which $\varphi(x_i^p)=\vert x\vert$ or $\varphi(x_i^p)=\Vert x\Vert^2$ and $H$ is a nonconvex loss function. 

Despite various successful and seminal applications,  \eqref{boxp} remains a non-trivial problem to solve due to the usual  absence of differentiability or twice-differentiability and the frequent presence of nonconvexity. 
As an example, if $p<1$, the function  {$\sum_{i=1}^n x_i^p$} is not even directionally differentiable in G\^{a}teaux sense when $x_i=0$ for any $i$. Similarly, when $p<2$, the objective function is not twice differentiable. Meanwhile, in the training of a neural network, $H$ is usually smooth but nonconvex, as in the case of \cite{deepcompression}. \cite{Wangetal2013} discussed some other cases where $H$ is nonconvex.

To establish first-/second-order necessary optimality conditions for local minimality, different variants of the KKT condition have been discussed  {when differentiability is potentially absent. In such a case,  optimality conditions based on the notion of subdifferential are studied by \cite{chenpenalty,Variational analysis,Audet,Jahn}.} Weaker optimality conditions without the use of subdifferential have been discussed by \cite{biansmooth,bian,bianlinear,liuma}. Interested readers are referred to \cite{bianMOR} for an excellent review on the optimality conditions. In particular, \cite{bian} considers the so-called scaled first-order optimality condition for \eqref{boxp}:
\begin{align}
x_i\frac{\partial{H(x)}}{\partial x_i}+\lambda p\varphi'(x_i^p)x_i^p=0,\qquad\forall i=1,\dots,n.\label{scaled KKT}
\end{align}
This condition is evidently weaker than the  conditions by \cite{chenpenalty,Variational analysis,Audet,Jahn}, in that \eqref{scaled KKT} always holds at the origin regardless of the objective function. According to \cite{bianMOR}, similar issues apply to the optimality conditions in \cite{biansmooth,bianlinear,liuma}.  {In contrast, our presented optimality condition does not rely on any form of subdifferential and is equivalent to the canonical version of the KKT condition when $f$ is smooth. Therefore, the presented optimality condition is tighter than \cite{biansmooth,bian,bianlinear,liuma}.}

 {Our research is also motivated by the need of characterizing approximations to the ``exact'' necessary condition, since it is generally impossible to solve \eqref{linp} exactly, even only for KKT solutions.}  As a result, the ``exact''  first- or second-order necessary conditions must be perturbed to properly characterize the actual solution  {obtained} through an algorithm. Furthermore, it is desirable to establish a connection between the optimality condition and its $\varepsilon$ perturbed version  {(approximation with inaccuracy measured by $\varepsilon$)} in order for the complexity results to be meaningful. 
 {Approximate KKT-like conditions in solving nonconvex and nonsmooth optimization have been proposed by \cite{bian,bianlinear,chenpenalty,bianMOR}.  In view of this gap in the literature, this paper presents  a set of  perturbed (first- and second-order) necessary optimality conditions that are originally defined in terms of a limit of perturbed stationary points.  Compare to \cite{chenpenalty,bianMOR}, our perturbed  necessary conditions are  free from the use of subdifferential, and  are stronger than \cite{bian,bianlinear}.}



To compute solutions satisfying our proposed perturbed necessary conditions, we develop a first- and second-order interior trust-region point (ITRP) algorithms. Both algorithms work in a general setting  that allows for irregularities of the objective function unaddressed in the literature. In particular, the first-order ITRP allows $f$ to be not even directionally differentiable. The resulting computational complexity, $O(\varepsilon^{-2})$ in achieving an $\varepsilon$-perturbed first-order stationary point (where $\varepsilon>0$), coincides with the best known complexity for solving smooth nonconvex problems using only first-order information and assuming the absence of matrix inversion. The second-order ITRP then applies to a class of problems where second-order derivative may not exist. The resulting complexity, $O(\varepsilon^{-3/2})$ and $O(\varepsilon^{-3})$ in achieving an $\varepsilon$-perturbed first-order and second-order stationary point, respectively, equals the best-known complexity for twice continuously differentiable functions. The corresponding $\varepsilon$-perturbed necessary optimality conditions are in stronger forms than those discussed in  {\cite{bian,bianlinear,chenpenalty,bianMOR}}. We further show that, at the same rate of complexity, the same type of  $\varepsilon$-perturbed  scaled optimality condition as in \cite{bian} can be achieved for a more general set of optimization problems by our second-order ITRP. For a comprehensive analysis of the ITRP, we further considered the case where $f$ is a quadratic function and present an alternative analysis for the same result in \cite{Ye98}. In such a special case, the ITRP is substantially accelerated and achieves both the first- and second-order conditions at a rate of $O(\varepsilon^{-1})$.



In contrast, in the literature, for smooth unconstrained optimization, when only first-order information is accessible and no matrix inversion is involved, the algorithms with best known complexity bounds take at most $O(\varepsilon^{-2})$ iterations to achieve a first-order stationary point up to a tolerance $\varepsilon$. It is the case of the steepest descent \cite{nesterovbook}, trust region methods \cite{trustregion} and the nonlinear stepsize control algorithms \cite{grapiglia2,tointnsc}, for instance. When second-order derivatives are used, the best known complexity is reduced to $O(\varepsilon^{-3/2})$ for first-order stationarity and, to find a second-order stationary point perturbed by $\varepsilon$, the best known complexity is $O(\varepsilon^{-3})$. See   \cite{grapiglia2,tointnsc,birginquad,curtis,tointsecond,tointsecond2,Nesterov-Polyak,martinez}. A different line of reasoning appeared recently in \cite{oliver,agar}, where the second-order information is iteratively approximated by the first-order one. In this case, the complexity bound of $O(\varepsilon^{-7/4})$ can be achieved for first-order stationarity. We do not pursue this last type of results. The best complexity bounds known are the same if constraints are considered \cite{tointconst,tointconst2} or in some nonsmooth cases \cite{biansmooth,bian,bianlinear,tointcomposite,grapiglia}. Our algorithms will achieve the best known complexity bounds of $O(\varepsilon^{-2})$, $O(\varepsilon^{-3/2})$ and $O(\varepsilon^{-3})$, depending on the use of second-order information. To our knowledge, our problem of discussion is more general than most existing developments in the literature.

The rest of the paper is organized in the following way. Section 2 articulates  our optimality condition and  Section 3 presents our algorithm and  complexity analyses. Finally,  Section 4 concludes the paper.\\

{\bf Notation.} Given $n\geq1$, $\R^n_+$ is the non-negative orthant in $\R^n$. We denote by $\R^n_{++}\subset\R^n_+$ the subset of vectors with all coordinates positive. Given $x\in\R^n$, we denote $diag(x)$ the diagonal matrix defined by $x$. When it is clear from confusion, we call $X=diag(x)$. The vectors $e_1,\dots,e_n$ is the canonical basis of $\R^n$ and $e\in\R^n$ is the vector of ones. The identity matrix of appropriate dimention will be denoted $\mathcal{I}$. Given a symmetric matrix $A$, we denote by $A\succeq 0$ when $A$ is positive semidefinite. The gradient vector and hessian matrix of a function $f:\R^n\to\R$ at $x\in\R^n$ is denoted, respectively, by $\nabla f(x)$ and $\nabla^2 f(x)$. We use $\|\cdot\|$ and $\|\cdot\|_\infty$ to represent the $\ell_2$- and $\ell_\infty$-norms, respectively. The smallest integer greater than or equal to $x\in\R$ is denoted by $\lceil x\rceil$.

\section{Optimality condition}

Let us consider, for simplicity, a special case of \eqref{linp} with only bound constraints $x\geq0$ and let us assume that for each $i=1,\dots,n$, the partial derivative $\frac{\partial f(x)}{\partial x_i}$ is not defined when $x_i=0$. A so-called scaled first-order optimality condition holds at a local minimizer $x^*$, given by $x^*_i\frac{\partial f(x^*)}{\partial x_i}=0, i=1,\dots,n$, where the product is taken to be zero when the derivative does not exist. See \cite{chenye}.

A point $x>0$ with $|x_i\frac{\partial f(x)}{\partial x_i}|\leq\varepsilon$ for all $i=1,\dots,n$, is called an $\varepsilon$-scaled first-order point. See \cite{bian}. In \cite{bianlinear}, it was proved that if a sequence $\{x^k\}\subset\R^n$ is such that $x^k\to x^*$ and $x^k$ is an $\varepsilon_k$-scaled first-order point for all $k$ with some $\varepsilon_k\to0^+$, then $x^*$ is a scaled first-order point. Combining both results, the situation is the one described in Figure \ref{bian}. Algorithms thus proceed to find $\varepsilon$-scaled first-order points, with some small $\varepsilon>0$ as in \cite{bian,bianlinear,liuma}.

\begin{figure}[h]
    \begin{center}
        \begin{tikzpicture}
            \draw[black, ultra thick] (-3,0) ellipse (2 cm and 1.5 cm);
            \draw[black, ultra thick] (-0,0) ellipse (2 cm and 1.5 cm);
            \draw[black, ultra thick] (-6,-2) rectangle (3,2);
            \node [above] at (0,0.2) {Limits of};
            \node [above] at (0.1,-0.25) {$\varepsilon$-scaled};
            \node [above] at (0.3,-0.8) {points ($\varepsilon\to0^+$)};
            \node [above] at (-3,0) {Local};
             \node [above] at (-3,-0.5) {minimizers};
            \node [above] at (1,-2) {Scaled first-order points};
        \end{tikzpicture}
    \end{center}
    \caption{Local minimizers and limits of $\varepsilon$-scaled first-order points, $\varepsilon\to0^+$, are scaled first-order points. Since a scaled first-order point can be seen as a weak necessary optimality condition, this gives little theoretical justification for considering an $\varepsilon$-scaled first-order point, $\varepsilon>0$, as an approximate solution.}
    \label{bian}
\end{figure}
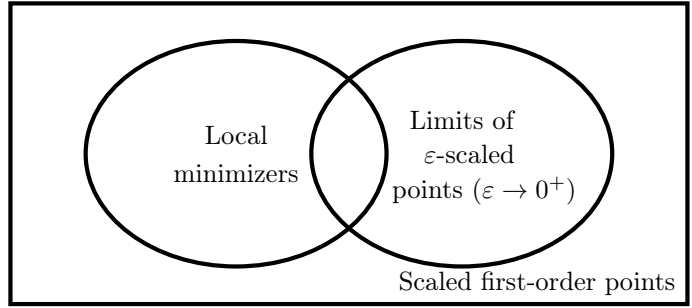

A first issue with this approach is that there is no analogous of the condition $\nabla f(x)\geq0$, present in the canonical KKT conditions when derivatives exist everywhere. This is overcome in \cite{bian,bianlinear,liuma} by considering the particular objective function \eqref{boxp}, where $\frac{\partial f(x)}{\partial x_i}\to+\infty$ when $x_i\to0$, or considering an optimality condition based on the computation of subdifferentials \cite{bianMOR}. A second issue is the fact that there is no measure of strength of the scaled first-order optimality condition, since, for instance, it always holds at $x=0$, regardless of the objective function. Finally, a third issue is the lack of relation between local minimizers and limits of $\varepsilon$-scaled first-order points, as suggested by Figure \ref{bian}. A similar criticism apply to the scaled second-order condition considered in \cite{bian}, and other first-order optimality conditions considered for this class of problems. See \cite{bianMOR} and references therein.

We will overcome these issues by defining first- and second-order optimality conditions that coincide with the canonical first- and second-order KKT conditions under usual smoothness assumptions, in a much more general framework. The optimality condition is defined in such a way that it naturally suggest an $\varepsilon$ perturbed first- and second-order criterion suitable for the complexity analysis. We also show that, in the case of linear constraints, our first-order (second-order) optimality condition can be satisfied by the computation of $\varepsilon$-scaled first-order (second-order, respectively) points, as long as a suitable non-negativity criterion associated with the gradient of the objective function is fulfilled.

\subsection{Necessary Optimality Conditions Based on Limits of Perturbations}

This section presents optimality conditions for a much more general problem than \eqref{linp}. Specifically,  we consider the problem:
\begin{equation}
\label{genp}
\begin{array}{ll}\mbox{Minimize}&f(x),\\
\mbox{subject to}&h(x)=0, c(x)\geq 0,\end{array}\end{equation}
where, $f:\R^n\to\R, h:\R^n\to\R^m$ and $c:\R^n\to\R^p$. Defining $C^\circ:=\{x\mid c(x)>0\}$ and $C:=\{x\mid c(x)\geq0\}$, $f$, $h$ and $c$ are assumed to be continuous on $C$ and differentiable on $C^\circ$. For the second-order optimality condition, we assume also second-order differentiability on $C^\circ$. For any local solution $x^*$ of \eqref{genp}, assume that there exists a sequence $\{z^k\}$ with $z^k\to x^*$ and $z^k\in C^\circ\cap\{x\mid h(x)=0\}$ for all $k$, which is typically necessary for the application of interior point methods.  Also assume that for any point $x\in C^\circ\cap\{x\mid h(x)=0\}$, the rank of $\{\nabla h_i(y)\}_{i=1}^m$ is constant for all $y$ in a neighborhood of $x$.

Note that derivatives of objective function and constraints may not exist when some $c_i(x)=0$. Note also that we do not assume any constraint qualification on the whole feasible set.\\

\begin{theorem}\label{opt}Under the assumptions described above, let $x^*$ be a local solution of \eqref{genp}. Then, there exists a sequence of approximate solutions $\{x^k\}\subset\R^n$ and sequences of approximate Lagrange multipliers $\{\lambda^k\}\subset\R^m$, $\{s^k\}\subset\R^p_+$ such that:
\begin{enumerate}
\item[i)] $c(x^k)>0$, $h(x^k)=0$ for all $k$ and $x^k\to x^*$,
\item[ii)] $\nabla f(x^k)+\sum_{i=1}^m\lambda_i^k\nabla h_i(x^k)-\sum_{i=1}^ps_i^k\nabla c_i(x^k)\to0$,
\item[iii)] 
 $c_i(x^k)s_i^k\to0$ for all $i=1,\dots,p$.
\end{enumerate}
If, in addition, $f$, $h$, and $c$ are twice differentiable on $C^\circ$, then, there exist sequences $\{\theta^k\}\subset\R^p_+$ and $\{\delta_k\}\subset\R_+, \delta_k\to0^+$ such that
\begin{enumerate}
\item[iv)] $ d^\top(\nabla^2f(x^k)+\sum_{i=1}^m\lambda_i^k\nabla^2h_i(x^k)-\sum_{i=1}^ps_i^k\nabla^2c_i(x^k)+\sum_{i=1}^p\theta_i^k\nabla c_i(x^k)\nabla c_i(x^k)^\top+\delta_k\mathcal{I})d\geq0$, for all $d\in\R^n$ with $\nabla h_i(x^k)^\top d=0, i=1,\dots,m.$
\item[v)] 
$c_i(x^k)^2\theta_i^k\to0^+$ for all $i=1,\dots,p$.
\end{enumerate}\end{theorem}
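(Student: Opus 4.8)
The plan is to generate the required sequences by an interior-point (log-barrier) construction on the smooth interior $C^\circ$, combined with a proximal localization centered at $x^*$. Fix $r>0$ so small that $x^*$ minimizes $f$ over $\bar B(x^*,r)\cap\{h=0,\,c\ge0\}$, choose parameters $\rho_k\to0^+$ and $\mu_k\to0^+$, and take points $z^k$ from the given interior sequence ($z^k\to x^*$, $z^k\in C^\circ$, $h(z^k)=0$). For each $k$ I consider the subproblem of minimizing
\[\psi_k(x):=f(x)+\tfrac{\rho_k}{2}\|x-x^*\|^2-\mu_k\sum_{j=1}^p\ln c_j(x)\]
over $\{x: h(x)=0,\ \|x-x^*\|\le r\}\cap C^\circ$. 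Because the barrier forces $\psi_k\to+\infty$ near $\partial C$ while $f$ stays bounded on the compact closure, a minimizing sequence cannot approach the boundary, so a minimizer $x^k\in C^\circ$ exists; by construction $c(x^k)>0$ and $h(x^k)=0$.

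The first real step is to prove $x^k\to x^*$ (item (i)). Comparing $\psi_k(x^k)\le\psi_k(z^k)$, bounding the barrier term at $x^k$ below by $-p\mu_k\ln M$ (with $M=\max_{j,\,x\in \bar B(x^*,r)\cap C}c_j(x)\ge1$ w.l.o.g.), and using $f(x^k)\ge f(x^*)$ from local minimality, I obtain an estimate of the form
\[\|x^k-x^*\|^2\ \le\ \frac{2\,(f(z^k)-f(x^*))}{\rho_k}+\|z^k-x^*\|^2+\frac{2\mu_k\,(p\ln M+\beta_k)}{\rho_k},\]
where $\beta_k=\sum_j|\ln c_j(z^k)|$. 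Here lies the main obstacle: to make item (iv) sharp I need the Hessian regularization $\delta_k=\rho_k$ to vanish, but a vanishing proximal weight no longer pins the minimizer to $x^*$. The remedy is a careful diagonal choice of parameters: first fix $\rho_k=1/k$; then pick $z^k$ far enough along the interior sequence that $\|z^k-x^*\|^2\le1/k$ and $f(z^k)-f(x^*)\le\rho_k/k$ (this also fixes $\beta_k<\infty$); finally pick $\mu_k\le[k^2(p\ln M+\beta_k)]^{-1}$. With these choices every term above tends to $0$, giving $x^k\to x^*$, while still $\mu_k\to0^+$ and $\rho_k\to0^+$. Since $x^k\to x^*$, the ball constraint is inactive for large $k$, so $x^k$ is an interior local minimizer of the smooth equality-constrained problem $\min\psi_k$ subject to $h=0$.

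For items (ii)--(iii) I apply the first-order necessary (Lagrange) conditions to this smooth problem; the constant-rank hypothesis on $\{\nabla h_i\}$ is exactly what guarantees that multipliers $\lambda^k\in\R^m$ exist. Writing the stationarity of $\psi_k$ and setting $s_j^k:=\mu_k/c_j(x^k)\ge0$ yields
\[\nabla f(x^k)+\sum_{i=1}^m\lambda_i^k\nabla h_i(x^k)-\sum_{j=1}^p s_j^k\nabla c_j(x^k)=-\rho_k(x^k-x^*),\]
whose right-hand side tends to $0$ because $\rho_k\to0$ while $\|x^k-x^*\|$ is bounded; this is (ii). Complementarity (iii) is immediate, since $c_j(x^k)s_j^k=\mu_k\to0$.

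For the second-order statement, assuming now twice differentiability on $C^\circ$, I invoke the second-order necessary condition (the tangent-space form, again licensed by the constant-rank condition) for $\min\psi_k$ subject to $h=0$ at $x^k$: the Hessian of the Lagrangian is positive semidefinite on $\{d:\nabla h_i(x^k)^\top d=0\}$. Computing that Hessian, the barrier term $-\mu_k\ln c_j$ contributes $-s_j^k\nabla^2c_j(x^k)+\theta_j^k\nabla c_j(x^k)\nabla c_j(x^k)^\top$ with $\theta_j^k:=\mu_k/c_j(x^k)^2\ge0$, and the proximal term contributes exactly $\rho_k\mathcal I$. Setting $\delta_k:=\rho_k\to0^+$ then reproduces verbatim the matrix in (iv), while (v) follows from $c_j(x^k)^2\theta_j^k=\mu_k\to0^+$. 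The only genuine difficulties are the parameter-balancing described above and verifying that the constant-rank hypothesis indeed permits the first- and second-order necessary conditions on the interior subproblem; the algebra identifying $s^k,\theta^k$ and taking the limits is routine.
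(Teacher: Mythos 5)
Your proof is correct, and it reproduces the paper's overall architecture --- a log-barrier localized at $x^*$, multipliers $s_j^k:=\mu_k/c_j(x^k)$ and $\theta_j^k:=\mu_k/c_j(x^k)^2$, the constant-rank hypothesis licensing the first- and second-order necessary conditions of the equality-constrained subproblem on $\{d:\nabla h_i(x^k)^\top d=0\}$, and complementarity in (iii) and (v) reducing to $\mu_k\to0^+$ --- but it diverges in the one place where there is a real choice: the proximal device. The paper regularizes with the \emph{fixed} quartic term $\tfrac14\|x-x^*\|^4$, chosen so that $x^*$ is the \emph{unique global} minimizer of the regularized, ball-constrained problem; then $x^k\to x^*$ follows from classical Fiacco--McCormick interior-penalty convergence with no coupling whatsoever between barrier parameter and proximal weight, and the quartic's derivatives vanish along the sequence automatically: its gradient $\|x^k-x^*\|^2(x^k-x^*)$ is absorbed into (ii), and $\delta_k$ is read off a posteriori as the largest eigenvalue of $2(x^k-x^*)(x^k-x^*)^\top+\|x^k-x^*\|^2\mathcal{I}$, i.e.\ $3\|x^k-x^*\|^2\to0$. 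You instead use a \emph{vanishing} quadratic $\tfrac{\rho_k}{2}\|x-x^*\|^2$, which, as you correctly identify, no longer pins the minimizer to $x^*$ and forces the diagonal balancing of $\rho_k$, $z^k$, $\mu_k$ through the energy comparison $\psi_k(x^k)\le\psi_k(z^k)$; your choices ($\rho_k=1/k$, $f(z^k)-f(x^*)\le\rho_k/k$, $\mu_k\le[k^2(p\ln M+\beta_k)]^{-1}$) do make every term vanish, and the use of $f(x^k)\ge f(x^*)$ is legitimate since $x^k$ lies in the local-minimality ball. What each route buys: the paper's quartic trick is cleaner (no parameter coupling, and it outsources convergence to standard penalty theory), whereas your comparison argument is self-contained, gives an explicit rate $\|x^k-x^*\|^2=O(1/k)$, and delivers a $\delta_k=\rho_k$ fixed a priori rather than depending on $x^k$. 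One caution: the second-order step under constant rank is not the textbook LICQ statement --- the paper cites a dedicated reference for it --- so your parenthetical ``licensed by the constant-rank condition'' deserves the justification you sketch (constant rank makes $\{h=0\}$ locally a manifold whose tangent space equals the null space of the Jacobian, and the curve-based argument then yields positive semidefiniteness of the Lagrangian Hessian there for \emph{any} stationary multipliers); with that spelled out, the proof is complete.
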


\begin{proof} Let us take $\delta>0$ small enough such that the problem

\begin{equation}\label{reg}\mbox{Minimize }f(x)+\frac{1}{4}\|x-x^*\|^4\mbox{, s.t. }c(x)\geq0, h(x)=0, \|x-x^*\|^2\leq\delta,\end{equation}

has $x^*$ as its unique global solution. \\ 

Let us consider the application of the classical interior penalty method \cite{fiacco} to problem \eqref{reg} in the following sense: given a sequence $\{\mu_k\}\subset\R_+, \mu_k>0$ with $\mu_k\to0^+$, consider for every $k$ the problem:

\begin{equation}
\label{ipproblem}
\begin{array}{ll}
\mbox{Minimize}&\varphi_k(x):=f(x)+\frac{1}{4}\|x-x^*\|^4-\mu_k\sum_{i=1}^m\log(c_i(x)),\\
\mbox{subject to}&c(x)>0, h(x)=0, \|x-x^*\|^2\leq\delta.\end{array}\end{equation}

It is well known that a global solution $x^k$ exists for all $k$ and that cluster points of $\{x^k\}$ are global solutions of $\eqref{reg}$, see \cite{fiacco}. By the last constraint of $\eqref{ipproblem}$, $\{x^k\}$ is bounded, which implies that $x^k\to x^*$ and thus i) holds.\\

For $k$ large enough, $x^k$ is a local solution of 
\begin{equation*}\mbox{Minimize }\varphi_k(x):=f(x)+\frac{1}{4}\|x-x^*\|^4-\mu_k\sum_{i=1}^m\log(c_i(x))\mbox{, s.t. }h(x)=0.\end{equation*}

Since the constraints $h(x)=0$ satisfy a constraint qualification, there exist Lagrange multipliers $\lambda^k\in\R^m$ such that

\begin{align}0=&\nabla\varphi_k(x^k)+\sum_{i=1}^m\lambda_i^k\nabla h_i(x^k)\nonumber
\\=&\nabla f(x^k)+\|x^k-x^*\|^2(x^k-x^*)+\sum_{i=1}^m\lambda_i^k\nabla h_i(x^k)-\sum_{i=1}^p\frac{\mu_k}{c_i(x^k)}\nabla c_i(x^k),\nonumber
\end{align}

which gives ii) and iii) for $s^k_i:=\frac{\mu_k}{c_i(x^k)}, i=1,\dots,p$.

The second-order differentiability assumption and the constant rank condition around $x^k$ is enough to ensure that (see \cite{conjnino}):
\begin{align}
0\leq&\,d^\top(\nabla^2 \varphi(x^k)+\sum_{i=1}^m\lambda_i^k\nabla^2 h_i(x^k))d\nonumber
\\=&\,
d^\top\left(\nabla^2 f(x^k)+\sum_{i=1}^m\lambda_i^k\nabla^2 h_i(x^k)-\sum_{i=1}^ps_i^k\nabla^2 c_i(x^k)\right.\nonumber
\\&\left.+\sum_{i=1}^p\frac{\mu_k}{c_i(x^k)^2}\nabla c_i(x^k)\nabla c_i(x^k)^\top+2(x^k-x^*)(x^k-x^*)^\top+\|x^k-x^*\|^2\mathcal{I}\right)d,\nonumber
\end{align}
for all $d\in\R^n$ such that $\nabla h_i(x^k)^\top d=0, i=1,\dots,m$.

The result follows defining $\theta_i^k:=\frac{\mu_k}{c_i(x^k)^2}$ for all $i=1,\dots,p$, and $\delta^k\geq0$ as the largest eigenvalue of $2(x^k-x^*)(x^k-x^*)^\top+\|x^k-x^*\|^2\mathcal{I}$ for all $k$, which converges to zero.
\end{proof}

The optimality conditions immediately suggests definitions for $\varepsilon$-perturbed first- and second-order stationary points:\\

\begin{definition}\label{epskkt} Given $\varepsilon>0$, a point $x\in\R^n$ is called an $\varepsilon$-KKT point for problem \eqref{genp} when there exist approximate Lagrange multipliers $\lambda\in\R^m$ and $s\in\R^p_+$ with:
\begin{enumerate}
\item[(i)] $h(x)=0$, $c(x)>0$,
\item[(ii)] $\|\nabla f(x)+\sum_{i=1}^m\lambda_i\nabla h_i(x)-\sum_{i=1}^p s_i\nabla c_i(x)\|_{\infty}\leq\varepsilon$,
\item[(iii)] 
$|c_i(x)s_i|\leq\varepsilon$ for all $i=1,\dots,p$.
\end{enumerate}
\end{definition}

\begin{definition}\label{epskkt2}Given $\varepsilon>0$, a point $x\in\R^n$ is called an $\varepsilon$-KKT2 point for problem \eqref{genp} when there exist approximate Lagrange multipliers $\lambda\in\R^m$ and $s\in\R^p_+$ and a parameter $\theta\in\R^p_+$ with:
\begin{enumerate}
\item[(i)] $h(x)=0$, $c(x)>0$,
\item[(ii)] $\|\nabla f(x)+\sum_{i=1}^m\lambda_i\nabla h_i(x)-\sum_{i=1}^p s_i\nabla c_i(x)\|_{\infty}\leq\varepsilon$,
\item[(iii)] 
$|c_i(x)s_i|\leq\varepsilon$ for all $i=1,\dots,p$,
\item[(iv)] $d^\top\left(\nabla^2f(x)+\sum_{i=1}^m\lambda_i\nabla^2h_i(x)-\sum_{i=1}^ps_i\nabla^2c_i(x)+\sum_{i=1}^p\theta_i\nabla c_i(x)\nabla c_i(x)^\top+\varepsilon\mathcal{I}\right)d\geq0,$ for all $d\in\R^n$ with $\nabla h_i(x)^\top d=0, i=1,\dots,m,$
\item[(v)] 
$|c_i(x)^2\theta_i|\leq\varepsilon$ for all $i=1,\dots,p$.
\end{enumerate}
\end{definition}

Note that our first- and second-order optimality conditions given by Theorem \ref{opt} can be equivalently stated as, for all $\varepsilon>0$, there exist $\varepsilon$-KKT and, respectively, $\varepsilon$-KKT2 points, arbitrarily close to $x^*$.\\

The first-order optimality condition is the generalization of the ones from \cite{akkt,cakkt} to non-differentiable problems. In the smooth case, it implies the  {canonical}  first-order KKT conditions under weak constraint qualifications (see \cite{rcpld,cpg,ccp}), in particular, under linear constraints. The second-order optimality condition is the generalization of the one from \cite{akkt2,cakkt2} to the non-differentiable case and it implies the canonical second-order KKT conditions defined in terms of the critical subspace under weak constraint qualifications, in particular, under linear constraints. When the constraints are smooth, a formulation of the optimality condition in terms of perturbed critical directions is presented in \cite{auglag2}. 
We note that the results from \cite{cakkt2} can also be generalized without assuming smoothness on the boundary of $C$. In particular, without proving feasibility of the sequence $\{x^k\}$, the constant rank assumption can be dropped. 

\subsection{Sufficient Conditions for $\varepsilon$-Perturbed Stationary Points}

Let us now focus on a special case of \eqref{genp}, where we assume $h(x):=\mathbf Ax-\mathbf b$ and $c(x):=x$. This section then presents sufficient conditions for $\varepsilon$-KKT and $\varepsilon$-KKT2 points as per Definitions \ref{epskkt} and \ref{epskkt2}.\\

\begin{proposition}\label{epsilonKKT}Given $\varepsilon>0$, a sufficient condition for a point $x\in\R^n$ to be an $\varepsilon$-KKT point for problem \eqref{linp} is the existence of $\lambda\in\R^m$ such that:
\begin{enumerate}
\item[(a)] $\mathbf Ax=\mathbf b, x>0$,
\item[(b)] $\nabla f(x)+\mathbf A^\top\lambda\geq-\varepsilon$,
\item[(c)] $\|X(\nabla f(x)+\mathbf A^\top\lambda)\|_\infty\leq\varepsilon$.
\end{enumerate}
\end{proposition}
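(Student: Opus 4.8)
The plan is to recognize Proposition \ref{epsilonKKT} as the instantiation of Definition \ref{epskkt} in the linearly-constrained setting and then to simply \emph{construct} the approximate multiplier $s$ that Definition \ref{epskkt} requires. Problem \eqref{linp} is the special case of \eqref{genp} with $h(x):=\mathbf Ax-\mathbf b$ and $c(x):=x$, so $p=n$, each $\nabla h_i(x)$ is the $i$-th row of $\mathbf A$ (whence $\sum_{i=1}^m\lambda_i\nabla h_i(x)=\mathbf A^\top\lambda$), and each $\nabla c_i(x)=e_i$ (whence $\sum_{i=1}^p s_i\nabla c_i(x)=s$). Under this identification, conditions (i)--(iii) of Definition \ref{epskkt} read: $\mathbf Ax=\mathbf b$, $x>0$; $\|\nabla f(x)+\mathbf A^\top\lambda-s\|_\infty\leq\varepsilon$; and $|x_i s_i|\leq\varepsilon$ for all $i$. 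The first of these is exactly hypothesis (a), and the given $\lambda$ already serves as the equality multiplier, so the only work is to produce an $s\in\R^n_+$ satisfying the last two.

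To this end I would abbreviate $g:=\nabla f(x)+\mathbf A^\top\lambda$ and define $s$ componentwise as the positive part $s_i:=\max\{g_i,0\}$. This choice is manifestly in $\R^n_+$, so the requirement $s\in\R^p_+$ is automatic. It then remains to verify the two inequalities. For condition (ii), note that componentwise $g_i-s_i$ equals $0$ when $g_i\geq0$ and equals $g_i$ when $g_i<0$; in the latter case hypothesis (b), which states $g_i\geq-\varepsilon$, gives $|g_i-s_i|=|g_i|\leq\varepsilon$, and the former case is trivial, so $\|g-s\|_\infty\leq\varepsilon$. For condition (iii), since $x_i>0$ we have $x_i s_i=\max\{x_i g_i,0\}$, and hypothesis (c), i.e. $|x_i g_i|\leq\varepsilon$, forces $x_i g_i\leq\varepsilon$; hence $0\leq x_i s_i\leq\varepsilon$, giving $|x_i s_i|\leq\varepsilon$.

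There is no genuine obstacle here; the proposition is a direct verification once the correct $s$ is chosen. The only point worth emphasizing is \emph{why} the positive part is the right object: taking $s_i=\max\{g_i,0\}$ lets the sign condition (b) control precisely the negative coordinates of $g$ that the truncation discards (bounding the stationarity residual (ii)), while the scaled condition (c) simultaneously controls the product $x_i g_i$ on the positive coordinates (bounding the complementarity residual (iii)). In other words, (b) and (c) are tailored exactly to the two places where the positive-part truncation can fail, so the single explicit formula $s=\max\{\nabla f(x)+\mathbf A^\top\lambda,0\}$ discharges both conditions at once.
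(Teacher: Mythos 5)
Your proof is correct and is exactly the paper's argument: the paper's own proof consists of the single line ``Define $s:=\max\{0,\nabla f(x)+\mathbf A^\top\lambda\}$ in Definition \ref{epskkt} and the claimed result follows from an easy calculation,'' and you have simply spelled out that easy calculation faithfully, including the correct sign convention $\|\nabla f(x)+\mathbf A^\top\lambda-s\|_\infty\leq\varepsilon$ and the use of $x>0$ in verifying $|x_is_i|\leq\varepsilon$.
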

\begin{proof}Define $s:=\max\{0,\nabla f(x)+\mathbf A^\top\lambda\}$ in Definition \ref{epskkt} and the claimed result follows from an easy calculation.\end{proof}

\begin{proposition} Given $\varepsilon>0$, a sufficient condition for a point $x\in\R^n$ to be an $\varepsilon$-KKT2 point for problem \eqref{linp} is the existence of $\lambda\in\R^m$ such that:
\begin{enumerate}
\item[(a)] $\mathbf Ax=\mathbf b, x>0$,
\item[(b)] $\nabla f(x)+\mathbf A^\top\lambda\geq-\varepsilon$,
\item[(c)] $\|X(\nabla f(x)+\mathbf A^\top\lambda)\|_\infty\leq\varepsilon$,
\item[(d)] $d^\top(X\nabla^2 f(x)X+\varepsilon\mathcal{I})d\geq0$ for all $d$ such that $AXd=0$. 
\end{enumerate}
\end{proposition}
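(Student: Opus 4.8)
The plan is to reuse the first-order construction of Proposition \ref{epsilonKKT} verbatim and then supply the extra multiplier $\theta$ demanded by Definition \ref{epskkt2}. Since problem \eqref{linp} is the case $h(x)=\mathbf Ax-\mathbf b$ and $c(x)=x$, every constraint is affine, so $\nabla^2 h_i(x)=0$, $\nabla^2 c_i(x)=0$ and $\nabla c_i(x)=e_i$. Consequently $\sum_{i=1}^p\theta_i\nabla c_i(x)\nabla c_i(x)^\top$ collapses to the diagonal matrix $diag(\theta)$, item (iv) of Definition \ref{epskkt2} reduces to requiring $d^\top(\nabla^2 f(x)+diag(\theta)+\varepsilon\mathcal{I})d\geq0$ for every $d$ with $\mathbf Ad=0$, and (v) reduces to $x_i^2\theta_i\leq\varepsilon$. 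Items (i)--(iii) are the same three conditions as in Definition \ref{epskkt}, so they follow immediately from hypotheses (a)--(c) exactly as in Proposition \ref{epsilonKKT}, taking $s:=\max\{0,\nabla f(x)+\mathbf A^\top\lambda\}$. Thus the entire task is to produce $\theta\in\R^p_+$ and verify (iv)--(v).

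First I would define the multiplier coordinatewise by $\theta_i:=\varepsilon\max\{0,\,x_i^{-2}-1\}$, which is legitimate and nonnegative because $x>0$ by (a). This choice is engineered to respect the complementarity ceiling: a short case split gives $x_i^2\theta_i=\varepsilon\max\{0,\,1-x_i^2\}$, which lies in $[0,\varepsilon]$, so (v) holds. The same case split yields the key pointwise estimate $x_i^2\theta_i+\varepsilon x_i^2\geq\varepsilon$ for every $i$ (equality when $x_i\leq1$, strict slack when $x_i\geq1$), equivalently $diag(x_i^2\theta_i)+\varepsilon X^2\succeq\varepsilon\mathcal{I}$.

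To establish (iv), I would introduce the change of variables $d=X\tilde d$, which is a bijection on $\R^n$ since $X$ is invertible. Under it the feasibility constraint $\mathbf Ad=0$ becomes $\mathbf AX\tilde d=0$, matching precisely the direction set in hypothesis (d). Using that $X$, $diag(\theta)$ and $\mathcal{I}$ are all diagonal, so their products are computed entrywise, the quadratic form transforms as
\[
d^\top\!\big(\nabla^2 f(x)+diag(\theta)+\varepsilon\mathcal{I}\big)d
=\tilde d^\top\!\big(X\nabla^2 f(x)X+diag(x_i^2\theta_i)+\varepsilon X^2\big)\tilde d.
\]
The pointwise estimate from the previous paragraph lets me bound the two diagonal contributions below by $\varepsilon\mathcal{I}$, so the right-hand side is at least $\tilde d^\top(X\nabla^2 f(x)X+\varepsilon\mathcal{I})\tilde d$, which is nonnegative by (d) exactly because $\mathbf AX\tilde d=0$. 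This yields (iv) and closes the argument.

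The only genuinely delicate point is the construction of $\theta$: it must simultaneously honor the complementarity bound $x_i^2\theta_i\leq\varepsilon$ and lift the scaled curvature by enough to absorb the gap between the $\varepsilon\mathcal{I}$ appearing in (d) and the $\varepsilon X^2$ produced by the scaling $d=X\tilde d$. The threshold $x_i=1$ is where these two demands exactly balance, and $\theta_i=\varepsilon\max\{0,x_i^{-2}-1\}$ is the minimal choice meeting both; the remainder is routine bookkeeping.
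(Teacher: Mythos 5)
Your proposal is correct, and while you handle items (i)--(iii) of Definition \ref{epskkt2} exactly as the paper does (via Proposition \ref{epsilonKKT} with $s:=\max\{0,\nabla f(x)+\mathbf A^\top\lambda\}$), your treatment of (iv)--(v) is genuinely different. The paper takes $\theta_i:=\varepsilon/x_i^2$ and argues indirectly: it perturbs hypothesis (d) to strict definiteness with an auxiliary $\varepsilon'>0$, invokes the Finsler-type fact (citing \cite[Proposition 2.1]{cakkt2}) that positive definiteness on $\{d:\mathbf AXd=0\}$ can be upgraded to positive definiteness on all of $\R^n$ by adding $\rho X\mathbf A^\top\mathbf AX$, conjugates by $X^{-1}$, restricts to $\{d:\mathbf Ad=0\}$, and passes to the limit $\varepsilon'\to0^+$; this delivers a marginally stronger conclusion than the definition demands---item (iv) holds even without the $+\varepsilon\mathcal{I}$ term, while (v) is met with equality $x_i^2\theta_i=\varepsilon$. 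You instead choose the truncated multiplier $\theta_i:=\varepsilon\max\{0,x_i^{-2}-1\}$ and verify (iv) head-on via the bijective scaling $d=X\tilde d$ (legitimate since $x>0$ by (a)), using the pointwise bound $x_i^2\theta_i+\varepsilon x_i^2\geq\varepsilon$, i.e.\ $diag(x_i^2\theta_i)+\varepsilon X^2\succeq\varepsilon\mathcal{I}$, to reduce everything to hypothesis (d). Your argument is more elementary---no auxiliary lemma, no penalty parameter $\rho$, no limiting argument---at the cost of genuinely spending the $\varepsilon\mathcal{I}$ slack that Definition \ref{epskkt2}(iv) grants. It is worth noting that your substitution, applied with the paper's own choice $\theta_i=\varepsilon/x_i^2$ (for which $X\,diag(\theta)\,X=\varepsilon\mathcal{I}$), yields $d^\top(\nabla^2 f(x)+diag(\theta))d=\tilde d^\top(X\nabla^2 f(x)X+\varepsilon\mathcal{I})\tilde d\geq0$ for all $d$ with $\mathbf Ad=0$ in one line, so your mechanism actually recovers the paper's stronger conclusion while bypassing its perturbation machinery entirely.
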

\begin{proof} The claimed satisfaction of (i)-(iii) in Definition \ref{epskkt2} follow immediately from Proposition \ref{epsilonKKT}.  The following shows (iv) and (v). For all $\varepsilon'>0$ it holds that $d^\top(X\nabla^2 f(x)X+(\varepsilon+\varepsilon')\mathcal{I})d>0$ for all $d\neq0$ such that $AXd=0$. It is well know that, in this case, there is some $\rho>0$ such that $X\nabla^2 f(x)X+(\varepsilon+\varepsilon')\mathcal{I}+\rho X\mathbf A^\top \mathbf AX$ is positive definite (see, for instance, \cite[Proposition 2.1]{cakkt2}). Since $X^{-1}$ is positive definite, we have $\nabla^2 f(x)+\sum_{i=1}^m\frac{\varepsilon+\varepsilon'}{x_i^2}e_ie_i^\mathtt{T}+\rho \mathbf A^\top \mathbf A$ is positive definite, where $e_i$ is the $i$-th canonical vector. Taking the limit $\varepsilon'\to0^+$ and restricting to $d$ with $\mathbf Ad=0$ we have $d^\top(\nabla^2 f(x)+\sum_{i=1}^m\frac{\varepsilon}{x_i^2}e_ie_i^\mathtt{T})d\geq0$ for all $d$ with $\mathbf Ad=0$ and the result follows defining $\theta_i:=\frac{\varepsilon}{x_i^2}, i=1,\dots,n$.\end{proof}

%
%
%
%

\section{Interior Trust-Region Point Algorithms and Computational Complexity for $\varepsilon$-Perturbed Stationary Points}\label{Sec: Complexity}

We once again focus on \eqref{linp} and present two interior trust-region point (ITRP) algorithms  that are theoretically ensured to generate $\varepsilon$-perturbed stationary points. Both algorithms belong to the class of fully polynomial time approximation schemes. Let $\Omega:=\{x\mid \mathbf Ax=\mathbf b, x\geq0\}$ denote the feasible set and $\Omega^\circ:=\{x\mid \mathbf Ax=\mathbf b, x>0\}$ its interior. Assume that  
the feasible region is bounded and has a non-empty interior.
For any given positive $\mu\le 1$, we consider the potential function
\begin{align}\phi(x):=f(x)-\mu\sum_{i=1}^n\log(x_i).\label{define PF}
\end{align}
Note that the gradient of the potential function at $x>0$ is
\[\nabla\phi(x)=\nabla f(x) - \mu X^{-1}e.\]
Then the ITRP algorithms are summarized in Algorithm \ref{Algorithm 1}, where we have a specific initialization rule; we elect to initialize the algorithm with an approximate analytic center $x^0\in\Omega^{\circ}$ that satisfies
\begin{align}
-\sum_{i=1}^n\log(x_i)\ge -\sum_{i=1}^n\log(x_i^0)-O(1),
\end{align}
for all $x:=(x_i)\in\Omega^{\circ}$ for some problem-independent constant $O(1)$. Such an initial solution is efficiently computable.

Meanwhile, we choose to terminate the algorithm when the per-iteration improvement on the potential function is smaller than a certain threshold to be specified soon afterwards. Constants $\mu$ and $\beta$ will also be defined later on.

\begin{algorithm}
\caption{Pseudo-code of the interior trust-region point  (ITRP) algorithm}\label{Algorithm 1}
\begin{description}
\item[Step 1.]Given $\varepsilon\in(0,\,1]$ and choose $x^0\in\Omega^{\circ}$ to be an approximate analytic center of the feasible region. Let  $t:=0$.
\item[Step 2.] Solve the following problem
\begin{align}
\min&~ \begin{cases} 
\nabla\phi(x^t)^\top X_td&\text{first-order ITRP}
\\\nabla\phi(x^t)^\top X_td+\frac{1}{2}d^\top X_t\nabla^2f(x^t)X_td & \text{second-order ITRP}
\end{cases}
\label{sub 1}
\\s.t.&~AX_td=0,\ \|d\|\le \beta;\label{sub 2}
\end{align}
where $X_t=diag(x^t)$.
Denote by $d^t$ the solution.
\item[Step 3.] Update $x^{t+1}:=x^t+X_td^t$.
\item[Step 4.] Algorithm terminates if stopping criterion is satisfied. Otherwise, let $t:=t+1$ and go to Step 2.
\end{description}
\end{algorithm}

In Algorithm \ref{Algorithm 1},  the per-iteration subproblem \eqref{sub 1}-\eqref{sub 2} can be chosen from the first-order or the second-order mode depending on the target of the optimization, that is, to achieve an $\varepsilon$-perturbed first- or second-order stationary point, respectively. Also, the second-order mode yields a perturbed first-order stationary point at a faster complexity rate. In both modes, the resulting per-iteration problem \eqref{sub 1}-\eqref{sub 2}  are easily solvable. Specifically, in the case of first-order ITRP, Problem  \eqref{sub 1}-\eqref{sub 2} admits a closed form solution that does not involve any Hessian information, nor matrix inversion. Therefore, in this case the ITRP belongs to the class of first-order algorithms. In contrast, in the second-order ITRP, the subproblem can be solved using a bisection scheme as per \cite{Ye98,ye1992affine} with a ``log-log'' (quadratic) rate of complexity. 

In the following, we will show that both modes of the ITRP entails the best rate of worst-case iteration complexity known for a stricter class of nonlinear optimization problems. We will make use of the following lemma, which is well known in the literature of interior-point algorithms (e.g., \cite{Karmarkar84}):
\begin{lemma}\label{lemma1}
Let $x>0$ and $\|X^{-1}d\|\le \beta <1$. Then
\[-\sum_{i=1}^n\ln(x_i+d_i)+\sum_{i=1}^n\ln(x_i)\le -e^\top X^{-1}d+\frac{\beta^2}{2(1-\beta)}.\]
\end{lemma}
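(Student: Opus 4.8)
The plan is to reduce the claim to a coordinatewise scalar estimate via the substitution $u := X^{-1}d$, and then to control each scalar term by a geometric-series bound coming from the Taylor expansion of the logarithm.

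First I would set $u_i := d_i/x_i$, so that $X^{-1}d = u$ and $\|u\| \le \beta$. Since $|u_i| \le \|u\| \le \beta < 1$ for every $i$, we have $1 + u_i > 0$, hence $x_i + d_i = x_i(1 + u_i) > 0$ and all logarithms are well defined. Factoring yields $\ln(x_i + d_i) = \ln(x_i) + \ln(1 + u_i)$, so the left-hand side of the claim collapses to $-\sum_{i=1}^n \ln(1 + u_i)$. Because $e^\top X^{-1}d = \sum_{i=1}^n u_i$, the inequality to be proved is equivalent to
\[\sum_{i=1}^n \bigl(u_i - \ln(1 + u_i)\bigr) \le \frac{\beta^2}{2(1-\beta)}.\]

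The crux is the scalar inequality: for any $t$ with $|t| < 1$,
\[t - \ln(1 + t) \le \frac{t^2}{2(1 - |t|)}.\]
To obtain it I would use the series $\ln(1+t) = \sum_{k \ge 1} (-1)^{k+1} t^k / k$, valid for $|t| < 1$, so that $t - \ln(1+t) = \sum_{k \ge 2} (-1)^k t^k / k$. Bounding termwise and using $1/k \le 1/2$ for $k \ge 2$ gives $t - \ln(1+t) \le \tfrac{1}{2} \sum_{k \ge 2} |t|^k = \tfrac{1}{2}\cdot |t|^2/(1 - |t|)$, which is exactly the stated estimate. This termwise geometric bound is the main technical point; the reduction above and the summation below are routine bookkeeping.

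Finally I would sum over $i$. Using $|u_i| \le \beta$ coordinatewise, so $1 - |u_i| \ge 1 - \beta$, I get
\[\sum_{i=1}^n \bigl(u_i - \ln(1+u_i)\bigr) \le \sum_{i=1}^n \frac{u_i^2}{2(1-|u_i|)} \le \frac{1}{2(1-\beta)}\sum_{i=1}^n u_i^2 = \frac{\|u\|^2}{2(1-\beta)} \le \frac{\beta^2}{2(1-\beta)},\]
where the last inequality uses $\|u\| \le \beta$. This closes the argument. The only place that demands genuine care is the scalar bound on $t - \ln(1+t)$; once it is in hand, the coordinate reduction and the final summation follow directly.
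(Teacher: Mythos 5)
Your proof is correct. The paper itself offers no proof of Lemma \ref{lemma1} --- it simply cites it as well known in the interior-point literature (e.g., Karmarkar) --- and your argument is precisely the standard one behind that citation: the substitution $u:=X^{-1}d$ with the coordinatewise bound $|u_i|\le\|u\|\le\beta<1$ (which also guarantees $x_i+d_i>0$), the scalar series estimate $t-\ln(1+t)\le\frac{t^2}{2(1-|t|)}$ for $|t|<1$, and the final summation using $\sum_{i=1}^n u_i^2=\|u\|^2\le\beta^2$, each step of which checks out.
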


\subsection{Complexity Analysis for the First-Order ITRP Algorithm}
This subsection presents the complexity analysis for the first-order ITRP with a general assumption that  $f$ is potentially not (directionally) differentiable. In the following, we first present our assumptions in Section \ref{first order assumptions}. Section \ref{complexity analysis first-order} then presents the promised complexity analyses.

\subsubsection{Assumptions for the first-order ITRP}\label{first order assumptions}
Our complexity analysis herein relies on the following set of assumptions.

\bigskip

{\bf\noindent Assumption 3:}
\begin{enumerate}
\item[(a)] Function $f(x)$ is differentiable for all $x\in\Omega^\circ$. In addition, there exists $\gamma\geq1$ such that for all $x\in\Omega^\circ$ and $d\in \{ d: \Vert d\Vert\leq r,\, X(e+d)\in\Omega\}$ for some $r<1$, 
$$f\left(X(e+d)\right)\leq f(x)+\langle X\nabla f(x),d\rangle +\frac{\gamma}{2}\|d\|^2.$$
\item[(b)]  {The feasible region is bounded with $\max\{\|x\|_\infty: x\in\Omega\}\leq R$, for some $R\geq1$}.\\
\item[(c)]The objective function is bounded from below in the feasible set, that is, there exists $L\in\R$ with $f(x)\geq L$ for all $x\in\Omega^\circ$.
\end{enumerate}

\begin{remark}
Assumption 3.(a) subsumes the following special but important cases:
\begin{enumerate}
\item For all $x, x^+\in\Omega$, it holds that $f\left(x^+\right)\leq f(x)+\langle \nabla f(x),x^+-x\rangle +\frac{\hat\beta}{2}\|x^+-x\|^2$ for some $\hat \beta>0$. Such an inequality implies Assumption 3.(a) with $\gamma:=\hat \beta R^2$.
\item Function $f:=f_1+f_2$ is a composite function, with $f_1$ being continuously differentiable and $f_2(x):=\sum_{i=1}^nx_i^p$ for any $p: 0<p<1$. To see this, we may observe that $f_2\left( X(d+e)\right)=\sum_{i=1}^n x_i^p (d_i+1)^p$ for any $d=(d_i)\in\R^n$ and any $x=(x_i)\in\Omega$. Also, $f_2(X(d+e))$ is continuously differentiable in $d$ and the largest eigenvalue of its Hessian in $d$ is upper bounded by $\frac{R^pp(p-1)}{(1-\beta)^{2-p}}$. It is worth noticing that $f_2$ is not differentiable when $x_i=0$ for any $i$.
\end{enumerate}
\end{remark}
\begin{remark}
 {Assumption 3.(b) can be easily generalized to the assumption that the level sets of $f$ are bounded, that is, given $x^0\in\Omega^\circ$, there exists $R\geq1$ such that $\sup\{\|x\|_\infty: f(x)\leq f(x^0), x\in\Omega^\circ\}\leq R$.}
\end{remark}

\subsubsection{Complexity estimate for the first-order ITRP}\label{complexity analysis first-order}

We are now ready to present our complexity analysis. We elect to terminate the algorithm whenever $\phi(x^{t+1})-\phi(x^t)> -\frac{\varepsilon^2}{2\gamma+4\varepsilon}$ and output the solution $x^t$.
\begin{theorem}\label{first order proof}
Suppose that Assumption 3 holds. Denote by $f^*$ the global minimal value of the objective function $f$ on $\Omega$. Consider Algorithm \ref{Algorithm 1} with first-order ITRP per-iteration problem. For any $\varepsilon\in(0,\,\min\{r,\,1\}]$,  let $\mu:=\varepsilon$,  $\beta:=\left(\gamma+2\mu\right)^{-1}\mu$, and $t^*:=\left\lceil\frac{\left(f(x^0)-f^*+O(1)-\varepsilon\right)\left(2\gamma+4\varepsilon\right)}{\varepsilon^2}\right\rceil$, the algorithm terminates before the $t^*$-th iteration at a $2\varepsilon$-KKT point, more precisely, at a feasible solution $\hat x$ that satisfies $\nabla f(\hat x) +{\mathbf A}^\top \hat y>0$ and $\Vert diag(\hat x)\left(\nabla f(\hat x) +{\mathbf A}^\top \hat y\right)\Vert_\infty\leq 2\varepsilon$ for some $\hat y$. Otherwise, it holds that $f(x^{t^*})-f^*\leq\varepsilon.$
\end{theorem}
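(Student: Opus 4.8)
The plan is to solve the per-iteration subproblem in closed form, establish a sufficient-decrease inequality for the potential $\phi$, and then read off both the stopping behaviour and the iteration count from a telescoping argument.

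First I would treat the first-order subproblem. Writing $g^t:=X_t\nabla\phi(x^t)=X_t\nabla f(x^t)-\mu e$ and letting $P_t$ denote the orthogonal projector onto $\{d:\mathbf A X_td=0\}$, the minimizer of the linear objective over the trust-region ball is $d^t=-\beta\,p^t/\|p^t\|$ with $p^t:=P_tg^t$ (and $d^t=0$ when $p^t=0$), so that $\|d^t\|=\beta$ and the optimal value is $\langle g^t,d^t\rangle=-\beta\|p^t\|$. Using $P_t=\mathcal I-X_t\mathbf A^\top(\mathbf A X_t^2\mathbf A^\top)^{-1}\mathbf A X_t$ and setting $\hat y:=-(\mathbf A X_t^2\mathbf A^\top)^{-1}\mathbf A X_tg^t$ produces the key identity $X_t\bigl(\nabla f(x^t)+\mathbf A^\top\hat y\bigr)=p^t+\mu e$, which later converts a bound on $\|p^t\|$ directly into the hypotheses of Proposition \ref{epsilonKKT}. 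Feasibility is preserved throughout: $\mathbf A x^{t+1}=\mathbf A x^t=\mathbf b$ and $x^{t+1}_i=x^t_i(1+d^t_i)>0$ since $|d^t_i|\le\beta<1$; moreover $\beta\le\varepsilon\le r$, so Assumption 3.(a) is applicable at $x^t$ with step $d^t$.

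Next I would derive the per-iteration decrease. Applying Assumption 3.(a) to the $f$-part and Lemma \ref{lemma1} (with increment $X_td^t$, so $\|X_t^{-1}(X_td^t)\|=\|d^t\|=\beta<1$) to the log-barrier part, and inserting $\langle X_t\nabla f(x^t)-\mu e,d^t\rangle=-\beta\|p^t\|$ together with $\|d^t\|^2=\beta^2$, yields
\[\phi(x^{t+1})-\phi(x^t)\le -\beta\|p^t\|+\frac{\gamma}{2}\beta^2+\frac{\mu\beta^2}{2(1-\beta)}.\]
With $\beta=\mu/(\gamma+2\mu)$ one has $1-\beta=(\gamma+\mu)/(\gamma+2\mu)$, and the stopping threshold equals $\varepsilon^2/(2\gamma+4\varepsilon)=\mu\beta/2$. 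Then I close the two cases. If the algorithm terminates at some $x^t$, then $\phi(x^{t+1})-\phi(x^t)>-\mu\beta/2$; dividing the decrease inequality by $\beta$ gives $\|p^t\|<\tfrac{\mu}{2}+\tfrac{\gamma}{2}\beta+\tfrac{\mu\beta}{2(1-\beta)}$, which collapses to $\|p^t\|<\mu=\varepsilon$ precisely because $\frac{\gamma}{\gamma+2\mu}+\frac{\mu}{\gamma+\mu}\le\frac{\gamma}{\gamma+\mu}+\frac{\mu}{\gamma+\mu}=1$. The identity $X_t(\nabla f(x^t)+\mathbf A^\top\hat y)=p^t+\mu e$ then yields $\nabla f(x^t)+\mathbf A^\top\hat y>0$ and $\|X_t(\nabla f(x^t)+\mathbf A^\top\hat y)\|_\infty\le\|p^t\|_\infty+\mu<2\varepsilon$, i.e. a $2\varepsilon$-KKT point by Proposition \ref{epsilonKKT}. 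If instead the algorithm runs the full $t^*$ iterations without terminating, each step achieves decrease at least $\varepsilon^2/(2\gamma+4\varepsilon)$, so telescoping gives $\phi(x^{t^*})\le\phi(x^0)-t^*\varepsilon^2/(2\gamma+4\varepsilon)$; rewriting $\phi(x^{t^*})-\phi(x^0)$ in terms of $f$ and the barrier, using $f(x^{t^*})\ge f^*$ and the approximate-analytic-center inequality $\sum_i\log x_i^{t^*}-\sum_i\log x_i^0\le O(1)$ with $\mu=\varepsilon\le1$, the choice of $t^*$ makes the constant $O(1)$ cancel and leaves $f(x^{t^*})-f^*\le\varepsilon$.

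The main obstacle I anticipate is the constant bookkeeping in the termination step: one must verify that the specific choice $\beta=\mu/(\gamma+2\mu)$ is exactly tuned so that the two error terms $\frac{\gamma}{2}\beta$ and $\frac{\mu\beta}{2(1-\beta)}$ are absorbed by the threshold and deliver the clean conclusion $\|p^t\|<\varepsilon$ rather than a larger multiple of $\varepsilon$. A secondary subtlety is making the analytic-center constant cancel in the ``otherwise'' case, which relies on $\mu\le1$ so that $\mu\cdot O(1)\le O(1)$, and on interpreting $\mathbf A$ as full row rank (via elimination of redundant equality constraints) so that the projector and multiplier $\hat y$ are well defined.
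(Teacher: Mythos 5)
Your proposal is correct and follows essentially the same route as the paper's proof: feasibility by induction, per-iteration descent from Assumption 3.(a) plus Lemma \ref{lemma1}, a case analysis on the scaled residual $p(x^t,y^t)$ against $\mu$ with $\beta=\mu/(\gamma+2\mu)$ tuned exactly so that the stopping threshold is $\mu\beta/2=\varepsilon^2/(2\gamma+4\varepsilon)$, and a telescoping argument using the approximate-analytic-center inequality in the non-termination case. The only (cosmetic) differences are that you solve the linear trust-region subproblem in closed form via the orthogonal projector with an explicit multiplier $\hat y$, where the paper instead invokes the subproblem's KKT conditions and splits into the cases $\Vert d^t\Vert<\beta$ and $\Vert d^t\Vert=\beta$, and that you keep the exact barrier error term $\frac{\mu\beta^2}{2(1-\beta)}$ rather than the paper's cruder bound $\mu\beta^2$ --- both yield the same conclusion $\Vert p^t\Vert<\mu$ and hence the same $2\varepsilon$-KKT certificate.
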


\begin{proof}

{\bf Step 1.} In this step, we would like to show that $x^t\in\Omega^\circ$ for all $t\geq 1$.  To this end, we notice that, if $x^{t-1}\in\Omega^\circ$, it holds that $x^{t}_i=x_i^{t-1}+x_i^{t-1}d^{t-1}_i=x_i^{t-1}(1+d^{t-1}_i)>0$ for any $i=1,...,n$, where the last inequality is because $\Vert d^{t-1}\Vert\leq \beta<1$ imposed as a constraint in \eqref{sub 2}. Also, if $x^{t-1}\in\Omega^\circ$, it holds that $\mathbf A x^{t}=\mathbf A(x^{t-1}+X_{t-1}d^{t-1})=\mathbf b+\mathbf AX_{t-1}d^{t-1}=\mathbf b$, where the last identity is based on constraint \eqref{sub 2}. Our proof for Step 1 completes by noticing that $x^0\in\Omega^\circ$. 

{\bf Step 2.} In this step, we would like to show that either of the following holds at iteration $k$:
\begin{align}
\phi(x^{t+1})-\phi(x^t)\leq -\frac{\varepsilon^2}{2\gamma+4\varepsilon},\label{whenever holds}
\end{align} 
or $\Vert X_t\nabla f(x^t) - \mu e+X_tA^\top y^t\Vert_\infty<2\varepsilon$ and $\nabla f(x^t) +{\mathbf A}^\top y^t>0$ for some $y^t\in\R^m$.

To this end, we first notice that subproblem \eqref{sub 1}-\eqref{sub 2} can be solved globally, whose first-order optimality condition yields that
\begin{align}
X_t\nabla f(x^t) - \mu e+X_tA^\top y^t+\lambda^t d^t=0,\label{KKT 1}
\end{align}
 for some Lagrange multipliers $y^t\in\R^m$ and $\lambda^t\in\R$.
From the inequality in Assumption 3.(a), since $x^t\in \Omega^\circ$ and $d^t:\, \Vert d^t\Vert\leq \beta=(\gamma+2\mu)^{-1}\mu<\varepsilon\leq 1$ from the result in Step 1, it holds that 
\begin{align}
f\left(X_{t}(e+d^t)\right)\leq f(x^t)+\langle X_t\nabla f(x^t),d^t\rangle +\frac{\gamma}{2}\|d^t\|^2.
\end{align}
Combined with Lemma \ref{lemma1}, it  implies that
\begin{align}
\phi(x^{t+1})-\phi(x^t)\leq& \langle \nabla f (x^t),\, X_td^t\rangle+\frac{\gamma}{2}\Vert d^t\Vert^2-\mu e^\top X_t^{-1}d+\mu\beta^2
\\=& \langle \nabla \phi(x^t),\, X_td^t\rangle+\frac{\gamma}{2}\Vert d^t\Vert^2+\mu\beta^2.
\end{align}
Thus, 
\begin{equation}
\phi(x^{t+1})-\phi(x^t)\leq \langle X_tA^\top y^t-\lambda d^t,\, d^t\rangle+\frac{\gamma}{2}\Vert d^t\Vert^2+\mu\beta^2
=\langle -\lambda^t d^t,\, d^t\rangle+\frac{\gamma}{2}\Vert d^t\Vert^2+\mu\beta^2.\label{inequality aa}
\end{equation}

{\bf Case 1:} If $\Vert d^t\Vert<\beta$, then $\lambda^t=0$ and $X_t\nabla f(x^t) +X_tA^\top y^t=\mu e$. Since $\mu:=\varepsilon>0$, it therefore holds that $\nabla f(x^t) +{\mathbf A}^\top y^t>0$ and that $\Vert X_t\nabla f(x^t) +X_tA^\top y^t\Vert_\infty\leq \varepsilon$.

{\bf Case 2:} Consider the case where $\Vert d^t\Vert=\beta$. Let $p(x,y):=X\nabla f(x) - \mu e+XA^\top y$. (Again, $X:=diag(x)$.)  From \eqref{KKT 1}, it therefore holds that $\Vert p(x^t,y^t)\Vert = \lambda^t\Vert d^t\Vert=\lambda^t\beta$. Combined with \eqref{inequality aa}, it yields that
\begin{align}
\phi(x^{t+1})-\phi(x^t)\leq -\lambda^t\beta^2+\frac{\gamma}{2}\Vert d^t\Vert^2+\mu\beta^2=-\beta\Vert p(x^t, y^t)\Vert +\left(\frac{\gamma}{2}+\mu\right)\beta^2.
\end{align}
{\bf\indent\indent Case 2.1:} Under Case 2, if $\Vert p(x^t,y^t)\Vert \geq \mu$, then 
\begin{align}
\phi(x^{t+1})-\phi(x^t)\leq -\beta\mu +\left(\frac{\gamma}{2}+\mu\right)\beta^2.
\end{align}
Since $\mu:=\varepsilon$ and $\beta:=\left(\gamma+2\mu\right)^{-1}\mu$, we have that
\begin{align}
\phi(x^{t+1})-\phi(x^t)\leq -\frac{\varepsilon^2}{2\gamma+4\varepsilon}.\label{whenever holds}
\end{align}

{\bf\indent\indent Case 2.2:} Under Case 2, if $\Vert p(x^t,y^t)\Vert <\mu$, then
\begin{align}
\Vert X_t\nabla f(x^t) - \mu e+X_tA^\top y^t\Vert_\infty\leq \Vert X_t\nabla f(x^t) - \mu e+X_tA^\top y^t\Vert<\mu,
\end{align}
therefore, $X_t\nabla f(x^t) +X_tA^\top y^t>0\Longrightarrow \nabla f(x^t) +{\mathbf A}^\top y^t>0$. Meanwhile, $\Vert X_t\nabla f(x^t) +X_tA^\top y^t\Vert_\infty<2\mu=2\varepsilon$ for given $\mu:=\varepsilon$. Summarizing the above cases, we know that Case 1, Case 2.1, and Case 2.2 are mutually exclusive. Thus we have the desired result in Step 2.

{\bf Step 3.} We would like to summarize the above steps to obtain the claimed results in this theorem. We first observe that, because the elected initial solution $x^0$ satisfies that
\[-\sum_{i=1}^n\log(x_i^t)\ge -\sum_{i=1}^n\log(x_i^0)-O(1),\]
we have that, if \eqref{whenever holds} holds for all $t\leq t'$, it holds that
\begin{align}
f(x^{t'})-f(x^0)\leq -\frac{t'\varepsilon^2}{2\gamma+4\varepsilon}+O(1).
\end{align}
It therefore holds that $f(x^{t'})-f^*\leq \left[f(x^0)-f^*\right]-\frac{t'\varepsilon^2}{2\gamma+4\varepsilon}+O(1)$.

Recall that the algorithm  terminates whenever $\phi(x^{t+1})-\phi(x^t)> -\frac{\varepsilon^2}{2\gamma+4\varepsilon}$ for some $t$.
Therefore, at iteration $t^*=\frac{\left(f(x^0)-f^*+O(1)-\varepsilon\right)\left(2\gamma+4\varepsilon\right)}{\varepsilon^2}$, it holds either that the algorithm has terminated before iteration $k^*$ at a feasible solution $\hat x$ that satisfies that $\nabla f(\hat x) +{\mathbf A}^\top \hat y>0$ and $\Vert diag(\hat x)\nabla f(\hat x) +\hat XA^\top \hat y\Vert_\infty\leq \varepsilon$. Otherwise, it holds that $f(x^{k^*})-f^*\leq\varepsilon.$
\end{proof}

\begin{remark}
The first-order ITRP solves a constrained problem with potential non-differentiability at an iteration complexity of $O(1/\varepsilon^2)$. For this types of problems, such a rate is best known to the literature. It is also worth emphasizing that the per-iteration problem admits a closed-form solution.
\end{remark}

\subsection{Complexity Analysis for the Second-Order ITRP Algorithm}

This subsection presents the complexity analysis for the second-order ITRP with three different sets of regularities on $f$: (i) $f$ is potentially not twice differentiable; (ii) $f$ is potentially not differentiable; and (iii) $f$ is a quadratic function.  The resulting complexity estimates as well as the characteristics of the final solution output from the ITRP vary according to the changes of assumptions. In the following, we first present our assumptions in Section \ref{second order assumptions}. Section \ref{complexity analysis second-order} then presents the promised complexity analyses.

\subsubsection{Assumptions for the second-order ITRP}\label{second order assumptions}
The analysis on the second-order ITRP relies on the following assumptions.

{\bf \noindent Assumption 4:}
Function $f(x)$ is twice differentiable for all $x\in\Omega^\circ$. For all $x\in\Omega^\circ$ and $d,d'\in\{d: \Vert d\Vert\leq r,\, X(e+d)\in\Omega^\circ\}$, for some $r<1$ and ${\eta}\geq 1$, it holds that
\begin{align}
&\Vert X\nabla^2 f\left(X(e+d)\right)-X\nabla^2 f\left(X(e+d')\right)\Vert\leq {\eta}\Vert d-d'\Vert;\quad\text{and}\nonumber\\
&\nabla f\left(X(e+d)\right)-\nabla f(x)\leq \langle  X \nabla f(x),\,d\rangle
+\frac{1}{2} d^\top X\nabla^2 f\left(x\right)X d+\frac{{\eta}}{3}\Vert d\Vert^3.\label{taylor expansion 1}
\end{align}

{\bf \noindent Assumption 5:}
Function $f(x)$ is twice differentiable for all $x\in\Omega^\circ$. For all $x\in\Omega^\circ$ and $d,d'\in\{d: \Vert d\Vert\leq r,\, X(e+d)\in\Omega^\circ\}$, for some $r<1$ and ${\eta}\geq 1$, it holds that
\begin{align}
&\Vert X\nabla^2 f\left(X(e+d)\right)X-X\nabla^2 f\left(X(e+d')X\right)\Vert\leq {\eta}\Vert d-d'\Vert;\quad\text{and}\nonumber\\
&\nabla f\left(X(e+d)\right)-\nabla f(x)\leq \langle  X \nabla f(x),\,d\rangle
+\frac{1}{2} d^\top X\nabla^2 f\left(x\right)X d+\frac{{\eta}}{3}\Vert d\Vert^3.\label{taylor expansion 2}
\end{align}

\begin{remark}\label{remark 10}
Assumption 4 and Assumption 5 subsume some special but important cases:
\begin{enumerate}
\item For all $x, x^+\in\Omega$, it holds that $f(x)$ is twice differentiable and 
\begin{align}\Vert\nabla^2f(x)-\nabla^2f(x^+)\Vert\leq \hat{\eta}\Vert x-x^+\Vert,\label{twice c2}
\end{align}
 for some ${\hat \eta}>0$. Such an inequality implies  both Assumptions 4 and Assumption 5 with ${\eta}:={\hat \eta} R^3$. These are immediate from the  observation that
 \begin{align} 
  \Vert X\nabla^2f(x)X-X\nabla^2f(x^+)X\Vert\leq \Vert X\Vert^2 {\hat \eta}\Vert x-x^+\Vert\leq\Vert X\Vert^3 \hat\eta\Vert d\Vert,\label{first inequality}
  \\
    \Vert X\nabla^2f(x)-X\nabla^2f(x^+)\Vert\leq \Vert X\Vert {\hat \eta}\Vert x-x^+\Vert\leq\Vert X\Vert^2 \hat\eta\Vert d\Vert,\label{second inequality}
  \end{align}
  as well  as the direct implication of \eqref{twice c2} in the form of
   \begin{align} 
\nabla f\left(X(e+d)\right)-\nabla f(x)\leq &\langle  X \nabla f(x),\,d\rangle
+\frac{1}{2} d^\top X\nabla^2 f\left(x\right)X d+\frac{{\hat\eta}}{3}\Vert X d\Vert^3\nonumber
\\\leq &\langle  X\nabla f(x),\,d\rangle
+\frac{1}{2} d^\top X\nabla^2 f\left(x\right)X d+\frac{R^3{\hat\eta}}{3}\Vert  d\Vert^3. \nonumber
  \end{align}
  
\item Let function $f:=f_1+f_2$ be a composite function, with $f_1$ being twice continuously differentiable. If $f_2(x):=\sum_{i=1}^nx_i^{p}$ for some $p: p>0$ then  for any $d=(d_i)\in\R^n: \Vert d\Vert\leq r<1$, we immediately have  
\begin{align}
\frac{\partial^2 f_2\left( X(d+e)\right)}{ \partial x_i^2}=p(p-1) x_i^{p-2}(d_i+1)^{p-2};\nonumber
\\
x_i\cdot\frac{\partial^2 f_2\left(X(d+e)\right)}{ \partial x_i^2}=p(p-1) x_i^{p-1}(d_i+1)^{p-2};\nonumber
\\(x_i)^2\cdot\frac{\partial^2 f_2\left(X(d+e)\right)}{ \partial x_i^2}=p(p-1) x_i^{p}(d_i+1)^{p-2}.\nonumber
\end{align}
Then, it is easily verifiable that:
\begin{itemize} 
\item  if $p:\, 1<p<2$,  Assumption 4 holds, but $f(x)$ is not twice differentiable for $x\in\{x_i=0,\,\text{for some $i$}\}$.
\item  if $p:\, 0<p<1$, Assumption 5 holds, but $f(x)$ is not differentiable for $x\in\{x_i=0,\,\text{for some $i$}\}$.
\end{itemize}
\end{enumerate}
\end{remark}

\begin{remark}
Assumption 5 subsumes Assumption 4: It is evident that Assumption 4 implies Assumption 5, while the reverse does not hold telling from the second special case in Remark \ref{remark 10}.
\end{remark}


\bigskip

{\bf \noindent Assumption 6:} $f$ is a quadratic function, that is, $\eta=0$.

\bigskip

\subsubsection{Complexity estimates for the second-order ITRP}\label{complexity analysis second-order}

This section presents the complexity estimates for the second-order ITRP under three different sets of assumptions. Theorem \ref{theorem 3.3} first considers the case when $f$ is potentially not twice differentiable and shows that the desired $\varepsilon$-perturbed first- and second-order stationary point can be achieved with a rate of $O(\varepsilon^{-3/2})$ and $O(\varepsilon^{-3})$, respectively. Then, Theorem \ref{theorem 3.4} generalizes to the case where $f$ is potentially not (directionally) differentiable and shows that the same set of  efficiency rates can be achieved in generating a weaker version of the  $\varepsilon$-perturbed first- and second-order stationary point. Such a version of approximate necessary conditions is also studied by \cite{bian}. Finally, Theorem \ref{theorem 3.5} presents a special case where $f$ is a quadratic function. In such a case, the second-order ITRP is especially efficient and achieves the $\varepsilon$-perturbed first- and second-order stationary point both at rate of $O(\varepsilon^{-1})$. Theorem \ref{theorem 3.5} presents an alternative proof for the same result presented in \cite{Ye98}. We should note that the termination criteria for the above three cases are slightly different.

For our first case, we consider the algorithm under  Assumptions 4 and 6. We elect to terminate the second-order ITRP whenever the following criteria hold:
\begin{align}
\phi(x^{t+1})-\phi(x^{t}) > -\frac{\sqrt{\varepsilon^3}}{200{\eta}^2R^{3/2}},\nonumber
\\
\phi(x^{t+2})-\phi(x^{t+1}) > -\frac{\sqrt{\varepsilon^3}}{200{\eta}^2R^{3/2}}.\nonumber
\end{align}
At termination, the algorithm outputs solution $x^{t+1}$.

\begin{theorem}\label{theorem 3.3}
Suppose that Assumptions 3.(b), 3.(c) and 4 hold. Denote by $f^*$ the global minimal value of the objective function $f$ on $\Omega$. Consider Algorithm \ref{Algorithm 1} with second-order ITRP per-iteration problem. For any $\varepsilon\in\left(0,\,\min\left\{10{\eta}^2 r^2,\,\frac{1}{2}\right\}\right]$,  let $\mu:=\frac{\varepsilon}{5{\eta} R}$,  $\beta:=\mu^{1/2}{\eta}^{-1/2}/\sqrt{2}$, and $t^*:=\left\lceil\frac{400{\eta}^2R^{3/2}\left(f(x^0)-f^*+O(1)-\varepsilon\right)\left(2{\eta}+4\varepsilon\right)}{\sqrt{\varepsilon^3}}+1\right\rceil$. The algorithm  terminates before the $t^*$-th iteration at an $\varepsilon$-KKT and $\sqrt\varepsilon$-KKT2 point, more precisely, at a feasible solution $\hat x$ that satisfies, for some $\hat y\in\R^m$, that 
\begin{align}
&\hat x>0,\quad\nabla f(\hat x) +{\mathbf A}^\top \hat y>-\varepsilon,\nonumber
\\&\Vert diag(\hat x)(\nabla f(\hat x) +{\mathbf A}^\top \hat y)\Vert_\infty\leq \varepsilon,\nonumber
\\
&d^\top\left(diag(\hat x)\nabla^2f(\hat x)diag(\hat x)+\sqrt{\varepsilon} I\right)d\geq 0,\quad\forall d:\, \mathbf A diag(\hat x) d=0.\nonumber
\end{align}
 Otherwise, it holds that $f(x^{t^*})-f^*\leq\varepsilon.$
\end{theorem}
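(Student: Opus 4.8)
The plan is to follow the three-step architecture of the proof of Theorem \ref{first order proof}, replacing its linear model by the quadratic one and carrying the curvature information throughout. \textbf{Step 1 (strict feasibility)} is verbatim: if $x^t\in\Omega^\circ$ then $x^{t+1}_i=x^t_i(1+d^t_i)>0$ since $\|d^t\|\le\beta<1$, and $\mathbf Ax^{t+1}=\mathbf b$ since $\mathbf AX_td^t=0$; as $x^0\in\Omega^\circ$, all iterates remain strictly feasible. To obtain the per-iteration decrease, I would record the global optimality conditions of the trust-region subproblem \eqref{sub 1}--\eqref{sub 2}, namely $X_t\nabla f(x^t)-\mu e+X_t\nabla^2f(x^t)X_td^t+X_t\mathbf A^\top y^t+\lambda^td^t=0$ with $\lambda^t\ge0$, $\lambda^t(\beta-\|d^t\|)=0$, together with the subspace curvature condition $d^\top(X_t\nabla^2f(x^t)X_t+\lambda^t\mathcal I)d\ge0$ for all $d$ with $\mathbf AX_td=0$. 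Setting $\psi(d):=f(X_t(e+d))$, Assumption~4 makes $\nabla^2_d\psi$ Lipschitz with constant $\eta R$, so $\psi$ satisfies the cubic bound; combining this with Lemma \ref{lemma1} for the barrier part gives $\phi(x^{t+1})-\phi(x^t)\le m^t(d^t)+\tfrac{\eta}{3}\|d^t\|^3+\tfrac{\mu\beta^2}{2(1-\beta)}$, where $m^t$ is the subproblem objective, and feasibility of $d=0$ yields $m^t(d^t)\le-\tfrac12\lambda^t\|d^t\|^2$.

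\textbf{Step 2 (per-iteration dichotomy)} is the heart, and it explains why the termination test inspects two consecutive steps and outputs the middle iterate $x^{t+1}$. For the \emph{second-order} condition at $x^{t+1}$ I would use the subproblem solved \emph{at} $x^{t+1}$ (step $t+1$): its curvature condition reads $d^\top(X_{t+1}\nabla^2f(x^{t+1})X_{t+1}+\lambda^{t+1}\mathcal I)d\ge0$ on $\{d:\mathbf AX_{t+1}d=0\}$, and a small decrease $\phi(x^{t+2})-\phi(x^{t+1})>-\sqrt{\varepsilon^3}/(200\eta^2R^{3/2})$ forces $\tfrac12\lambda^{t+1}\beta^2$ below the threshold, hence $\lambda^{t+1}\le\sqrt\varepsilon$ after inserting $\mu=\varepsilon/(5\eta R)$ and $\beta^2=\mu/(2\eta)$; this is exactly the desired $\sqrt\varepsilon$-KKT2 inequality. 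Conversely, if that curvature were below $-\sqrt\varepsilon$, testing $d^{t+1}=\pm\beta v$ along the offending eigenvector produces a model, hence potential, decrease past the threshold. For the \emph{first-order} condition at $x^{t+1}$ I would use step $t$: the Lipschitz-Hessian remainder gives $X_t(\nabla f(x^{t+1})+\mathbf A^\top y^t)=\mu e-\lambda^td^t+E^t$ with $\|E^t\|\le\tfrac{\eta R}{2}\beta^2=\varepsilon/(20\eta)$, while a small decrease at step $t$ bounds $\lambda^t\le\sqrt\varepsilon$ and keeps $\|d^t\|\le\beta$, so that $\lambda^t\|d^t\|+\|E^t\|=O(\varepsilon)$. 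Taking $\hat y:=y^t$ then yields both the scaled bound $\|diag(\hat x)(\nabla f(\hat x)+\mathbf A^\top\hat y)\|_\infty\le\varepsilon$ and, since the barrier term $\mu e$ dominates, the sign condition $\nabla f(\hat x)+\mathbf A^\top\hat y>-\varepsilon$. Thus, when both consecutive decreases are small, $x^{t+1}$ is simultaneously $\varepsilon$-KKT and $\sqrt\varepsilon$-KKT2 by Proposition \ref{epsilonKKT} and its second-order counterpart.

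\textbf{Step 3 (complexity count).} Because termination demands two consecutive small steps, before termination no two consecutive steps are both small, so at least every other step decreases $\phi$ by the threshold $\sqrt{\varepsilon^3}/(200\eta^2R^{3/2})$. Using the approximate-analytic-center initialization to bound $-\mu\sum_i\log x^t_i$ below by $-\mu\sum_i\log x^0_i-\mu\,O(1)$, and Assumption~3.(c) to bound $f$ below, the telescoped potential decrease cannot exceed $f(x^0)-f^*+O(1)$; this caps the number of large-decrease steps and yields the stated $t^*$, and if the algorithm has not stopped by then the same telescoping gives $f(x^{t^*})-f^*\le\varepsilon$.

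The step I expect to be the main obstacle is the first-order certification at the \emph{intermediate} iterate $x^{t+1}$ through the \emph{previous} subproblem. Two difficulties compound. First, the model is built at $x^t$ in the scaling $X_t$, whereas the target conditions live at $x^{t+1}$ in the scaling $X_{t+1}=X_t\,diag(e+d^t)$, so I must propagate the quadratic Taylor remainder $E^t$ through the change of scaling while keeping every constant proportional to the threshold; this is precisely what pins down $\mu=\varepsilon/(5\eta R)$ and $\beta=\mu^{1/2}\eta^{-1/2}/\sqrt2$. Second, and more delicate, the positivity requirement $\nabla f(\hat x)+\mathbf A^\top\hat y>-\varepsilon$ is an \emph{unscaled} inequality, so near the boundary the barrier contribution $\mu/x_i$ must be shown to outweigh the remainder $E^t$ componentwise; unlike the linear-model case of Theorem \ref{first order proof}, where the relation holds exactly at the same point and positivity is immediate, here the second-order remainder must be controlled through the scaled Lipschitz estimate of Assumption~4 so that the interior-point term wins. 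Making this remainder-versus-barrier comparison rigorous, rather than the routine bookkeeping of the decrease inequalities, is where the real work lies.
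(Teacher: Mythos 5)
Your proposal is correct and follows essentially the same route as the paper's proof: the same trust-region global optimality conditions and potential-decrease dichotomy, the same two-consecutive-small-decrease termination certifying the middle iterate $x^{t+1}$ (second order directly from the step-$(t+1)$ multiplier bound $\lambda^{t+1}\le\mu/\beta=\sqrt{2\mu\eta}\le\sqrt{\varepsilon}$, first order by propagating the step-$t$ stationarity through the mean-value remainder, with the unscaled sign condition $\nabla f(\hat x)+\mathbf A^\top\hat y>-\varepsilon$ secured exactly as in the paper by the one-sided scaled Lipschitz estimate of Assumption 4 giving $\eta\beta^2=\mu/2<\mu$), and the same every-other-step telescoping that produces the factor $400$ in $t^*$. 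One small justification slip, immaterial to the argument: the bound $m^t(d^t)\le-\frac{1}{2}\lambda^t\Vert d^t\Vert^2$ comes from the stationarity and complementarity conditions you recorded (the paper's manipulation leading to its inequality \eqref{kkt used here now}), not from feasibility of $d=0$, which only yields $m^t(d^t)\le 0$.
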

\begin{proof} 
{\bf Step 1.} Following Step 1 of the proof for Theorem \ref{first order proof}, it is straightforward that $x^t\in\Omega^\circ$ for all $t\geq 1$. 

{\bf Step 2.} We would like to show that if $\phi(x^{t+1})-\phi(x^t) > -\frac{\sqrt{2{\eta}\mu^3}}{24{\eta}}$ then
$\nabla^2f(x^t)X_td^t-\mathbf A^\top y^t+ \nabla f(x^t)>0$
and
$0\le x_i(\nabla f(x^t)+\nabla^2f(x^t)d^t-A^\top y^t)_i\le 2\mu,\ \forall i,$ for $\beta:=\mu^{1/2}{\eta}^{-1/2}/\sqrt{2}$ and some $y^t\in\R^m$.

To this end, combine Assumption 4 with both $\Vert d^t\Vert\leq \beta=\mu^{1/2}{\eta}^{-1/2}/\sqrt{2}\leq r$ and  Lemma \ref{lemma1}. It therefore holds that
\begin{align}
&~~~\phi(x^{t+1})-\phi(x^t)\nonumber
\\&\le \nabla f(x^t)^\top X_td^t+\frac{1}{2}(d^t)^\top X_t\nabla^2f(x^t)X_td^t+\frac{{\eta}}{3}\|d^t\|^3-\mu e^\top X_t^{-1}d^t+\mu\beta^2\nonumber\\                       
                       &= \nabla\phi(x^t)^\top X_td^t+\frac{1}{2}(d^t)^\top X_t\nabla^2f(x^t) X_td^t+\frac{{\eta}}{3}\|d^t\|^3+\mu\beta^2\nonumber\\
                       &\le \nabla\phi(x^t)^\top X_td^t+\frac{1}{2}(d^t)^\top X_t\nabla^2f(x^t)X_td^t+\left(\frac{{\eta}}{3}\beta+\mu\right)\beta^2.\label{descent inequality here}
                       \end{align}
    Then, the necessary and sufficient global optimality conditions of the trust-region subproblem, besides the feasibility of $d^t$, are
\begin{align}
\begin{split}
&(X_t\nabla^2f(x^t)X_t+\lambda^t I)d^t-X_t\mathbf A^\top y^t=-X_t \nabla\phi(x^t);
\\
&(X_t\nabla^2f(x^t)X_t+\lambda^t I)_{AX_t}\succeq 0,\quad \lambda^t\ge 0,\quad\lambda^t(\beta-\|d^t\|)=0;
\end{split}
\label{second order condition per-iteration}
\end{align}
for Lagrange multipliers $y^t\in\R^m$ and $\lambda^t\in\R$, see \cite{vavasis,sorensen,gay}. Here, $(X_t\nabla^2f(x^t)X_t+\lambda^t I)_{AX_t}\succeq 0$ means 
\[d^\top (X_t\nabla^2f(x^t)X_t+\lambda^t I)d\ge 0,\ \forall d\in \{d:\ AX_td=0\}.\]

If $\|d'\|=\beta$, 
let vector
\[p(x^t,y^t)= X_t\nabla^2f(x^t)X_td^t-X_t\mathbf A^\top y^t+X_t \nabla\phi(x^t).\]
Then from \eqref{second order condition per-iteration}, we have
\begin{equation}\label{lambda}
\lambda^t d^t=-p(x^t,y^t).\end{equation}
Thus,
\begin{align}
&\nabla\phi(x^t)^\top X_t  d^t+\frac{1}{2}(d^t)^\top  X_t\nabla^2f(x^t)X_td^t\nonumber
\\=&\frac{1}{2}\nabla\phi(x^t)^\top X_t d^t+\frac{1}{2}(d^t)^\top (X_t\nabla\phi(x^t)+X_t\nabla^2f(x^t)X_td^t)\nonumber\\
= &\frac{1}{2}(\nabla\phi(x^t)^\top X_t -A^\top y^t)^\top d^t+\frac{1}{2}(d^t)^\top (X_t\nabla\phi(x^t)+X_t\nabla^2f(x^t)X_td^t-A^\top y)\nonumber\\
=& -\frac{1}{2}(d^t)^\top (X_t\nabla^2f(x^t)X_t+\lambda^t I)d^t+\frac{1}{2}(d^t)^\top p(x^t,y^t)\nonumber\\
\le &\frac{1}{2}(d^t)^\top p(x^t,y^t)= -\frac{1}{2}\lambda^t\|d^t\|^2,\label{kkt used here now}
\end{align}
where \eqref{kkt used here now} is immediately due to \eqref{lambda}. 

As an immediate result, combined with \eqref{descent inequality here}, it holds that
\begin{align}
&\phi(x^{t+1})-\phi(x^t)\le -\frac{1}{2}\lambda^t\|d^t\|^2+\left(\frac{{\eta}}{3}\beta+\mu\right)\beta^2
\\=&-\frac{1}{2}\lambda^t\|d^t\|^2+\left(\frac{{\eta}}{3}\mu^{1/2}{\eta}^{-1/2}/\sqrt{2}+\mu\right)\mu{\eta}^{-1}/2
\\=&-\frac{1}{2}\lambda^t\|d^t\|^2+\left(\frac{\sqrt{2{\eta}\mu^3}}{12{\eta}}+\frac{\mu^2}{2{\eta}}\right).
\end{align}
Recall that ${\eta}\geq 1$ and $\varepsilon\leq \frac{1}{2}\leq \frac{5{\eta} R^2}{2}\Longrightarrow \mu\leq \frac{{\eta}}{8}\Longrightarrow \frac{\sqrt{2{\eta}\mu^3}}{12{\eta}}+\frac{\mu^2}{2{\eta}}\leq \frac{5\sqrt{2{\eta}\mu^3}}{24{\eta}}$.
If $\phi(x^{t+1})-\phi(x^t) > -\frac{\sqrt{2{\eta}\mu^3}}{24{\eta}}$, then $-\frac{\sqrt{2{\eta}\mu^3}}{24{\eta}}< -\frac{1}{2}\lambda^t\|d^t\|^2+\left(\frac{{\eta}}{3}\beta+\mu\right)\beta^2\Longrightarrow \frac{1}{2}\lambda^t\|d^t\|^2 < \frac{\sqrt{2{\eta}\mu^3}}{4{\eta}}$. We might consider the following two cases.

{\bf Case 1.} If  $\Vert d^t\Vert<\beta$, it then holds that $\lambda^t=0$. As a result, condition \eqref{second order condition per-iteration} yields that
\begin{align}
X_t\nabla^2f(x^t)X_td^t-X_t\mathbf A^\top y^t+X_t \nabla\phi(x^t)=0;\quad (X_t\nabla^2f(x^t)X_t)_{AX_t}\succeq 0.\label{case 1 result 2nd}
\end{align}
Thus, it holds that
\begin{align}
\Vert X_t\nabla^2f(x^t)X_td^t-X_t\mathbf A^\top y^t+X_t\nabla f(x^t)\Vert_\infty = \mu< 2\mu,
\end{align}
and 
\begin{align}
\nabla^2f(x^t)X_td^t-\mathbf A^\top y^t+\nabla f(x^t)>0.
\end{align}


{\bf\indent Case 2.} If $\Vert d^t\Vert=\beta$, then $\|p(x^t,y^t)\|=\lambda^t\beta$. Thus
\begin{align}\frac{\sqrt{2{\eta}\mu^3}}{4{\eta}}&> \frac{1}{2}\lambda^t\|d^t\|^2=\frac{1}{2}\lambda^t\beta^2 = \frac{1}{2}\beta\|p(x^t,y^t)\|=\frac{\sqrt{2{\eta} \mu}}{4}\|p(x^t,y^t)\|,\nonumber
\end{align}
which means that $\|p(x^t,y^t)\|< \mu$, that is,
\begin{align}\mu>& \|X_t\nabla^2f(x^t)X_td^t-X_t\mathbf A^\top y^t+X_t \nabla\phi(x^t)\|_{\infty} \nonumber
\\=&\|(X_t\nabla^2f(x^t)X_td^t-X_t\mathbf A^\top y^t+X_t \nabla f(x^t)) -\mu e\|_{\infty},\nonumber
\end{align}
which implies
\[\nabla^2f(x^t)X_td^t-\mathbf A^\top y^t+ \nabla f(x^t)>0,\]
and
\[0\le x_i(\nabla f(x^t)+\nabla^2f(x^t)d^t-A^\top y^t)_i\le 2\mu,\ \forall i.\]

Combining Cases 1 and  2, we have the desired result in Step 2.

{\bf Step 3.} We would like to show that once it holds that 
\begin{align}&\nabla^2f(x^t)X_td^t-\mathbf A^\top y^t+ \nabla f(x^t)>0; \nonumber
\\ \text{and} \quad& 0\le x_i(\nabla f(x^t)+\nabla^2f(x^t)d^t-A^\top y^t)_i\le 2\mu,\ \forall i.\label{t indicator 1}
\end{align}
then, it simultaneously holds that, for some $\hat y\in\R^m$:
\begin{align}
\begin{split}
\nabla f(x^{t+1})-\mathbf A^\top \hat y>-\frac{\mu }{2}
\\
\vert x_i^{t+1}(\nabla f(x^{t+1})-A^\top \hat y)_i\vert \leq 4\mu+\mu R,~~\forall i.
\end{split}\label{first order kkt approximate}
\end{align}

To that end, notice that, since $x^t,\,x^{t+1}\in\Omega^\circ$, from mean value theorem, it holds that, for some $\tau\in[0,\,1]$,
\begin{align}
&\nabla f(x^{t+1})-\nabla f(x^t)\nonumber
\\=&\nabla^2 f(\tau(x^{t+1}-x^t)+x^t)(x^{t+1}-x^{t})=\nabla^2 f(\tau(x^{t+1}-x^t)+x^t)X_td^t,
\end{align}
and thus
\begin{align}
&\Vert\nabla f(x^{t+1})-\nabla f(x^t)-\nabla^2 f(x^t) X_td^t\Vert\nonumber
\\=&\Vert\left(\nabla^2 f(x^t)- \nabla^2 f(\tau(x^{t+1}-x^t)+x^t)\right)X_td^t\Vert\nonumber
\\=&\Vert\left(\nabla^2 f(X_t e)- \nabla^2 f(X_t(\tau d^t+e))\right)X_t\Vert \Vert d^t\Vert\nonumber
\\\leq& {\eta} \tau\Vert d^t\Vert^2\leq  {\eta} \Vert d^t\Vert^2\leq {\eta} \beta^2,\label{mean value theorem}
\end{align}
where the last line is due to Assumption 4.(a), combined with $\Vert d\Vert\leq \beta<r$ and $x^t,\, x^{t+1}\in\Omega^\circ$,
which will be useful soon afterwards.

Similarly, we also have
\begin{align}
&\Vert X_t\nabla f(x^{t+1})-X_t\nabla f(x^t)-X_t\nabla^2 f(x^t) X_td^t\Vert\nonumber
\\=&\Vert X_t \left(\nabla^2 f(x^t)- \nabla^2 f(\tau(x^{t+1}-x^t)+x^t)\right)X_td^t\Vert\nonumber
\\\leq& {\eta} \Vert X_t\Vert\Vert d^t\Vert^2\leq  {\eta} R\Vert d^t\Vert^2\leq {\eta} R\beta^2,\label{mean value theorem 2}
\end{align}

Combining \eqref{t indicator 1} with \eqref{mean value theorem}, we have that
\begin{align}
&\nabla f(x^{t+1})-\mathbf A^\top y^t\nonumber
\\\geq  &\nabla^2f(x^t)X_td^t-\mathbf A^\top y^t+ \nabla f(x^t)-\Vert \nabla f(x^{t+1})-\nabla f(x^t)-\nabla^2 f(x^t) X_td^t\Vert_\infty\nonumber
\\\geq& -{\eta} \beta^2=-\frac{\mu}{2}.\nonumber
\end{align}

Meanwhile, combining  \eqref{t indicator 1} with \eqref{mean value theorem 2}, it obtains that
\begin{align}
&\vert x_i^{t+1}(\nabla f(x^{t+1})-A^\top y^t)_i\vert \nonumber
\\\leq & \vert (1+d_i^t)x_i^t(\nabla f(x^t)+\nabla^2f(x^t)d^t-A^\top y^t)_i\vert\nonumber
\\&+\vert 1+d_i^t\vert\cdot\Vert X_t\nabla f(x^{t+1})-X_t\nabla f(x^t)-X_t\nabla^2 f(x^t) X_td^t\Vert_\infty\nonumber
\\\leq &(1+\beta)(2\mu+{\eta} R)\beta^2 \leq (1+\beta)\left(2\mu+\frac{\mu R}{2}\right)\leq 4\mu+\mu R.\nonumber
\end{align}
The last line is due to $\vert 1+d_i^t\vert\leq (1+\beta)\leq 2$.

{\bf Step 4.}  We would like to show that, if $\phi(x^{t+2})-\phi(x^{t+1})>-\frac{\sqrt{2{\eta}\mu^3}}{24{\eta}}$, then $(X_{t+1}\nabla^2f(x^{t+1})X_{t+1}+\sqrt{2\mu{\eta}} I)_{AX_{t+1}}\succeq 0$. To this end, we invoke \eqref{lambda} (where we let $t:=t+1$), \eqref{case 1 result 2nd} (where we let $t:=t+1$), and \eqref{second order condition per-iteration} (where we let $t:=t+1$). The combination of the three results in 
\begin{align}
\left(X_{t+1}\nabla^2f(x^{t+1})X_{t+1}+\frac{\|p(x^{t+1},y^{t+1})\|}{\beta} I\right)_{AX_{t+1}}\succeq 0.\label{second order approximate second}
\end{align}
Further observe that from Step 2, it holds that, if $\phi(x^{t+2})-\phi(x^{t+1})>-\frac{\sqrt{2{\eta}\mu^3}}{24{\eta}}$, then $\frac{\|p(x^{t+1},y^{t+1})\|}{\beta}\leq \frac{\mu}{\beta}=\sqrt{2\mu{\eta}}$. Combined with \eqref{second order approximate second}, we have the claimed result in this step.

{\bf Step 5.} This step summarizes the above steps and prove the claimed results of the theorem.

We recall here $x^0$ is the approximate analytic center that satisfies 
\begin{align}
-\sum_{i=1}^n\log(x_i^t)\ge -\sum_{i=1}^n\log(x_i^0)-O(1),\label{approximate analytic center here}
\end{align}
where $O(1)$ is a constant.

We know that
 at iteration $t^*:=\frac{400{\eta}^2R^{3/2}\left(f(x^0)-f^*+O(1)-\varepsilon\right)\left(2{\eta}+4\varepsilon\right)}{\sqrt{\varepsilon^3}}+1$, where $O(1)$ is the same number as in \eqref{approximate analytic center here} if the termination criteria of simultaneously satisfying
\begin{align}
\phi(x^{t+1})-\phi(x^t) > -\frac{\sqrt{\varepsilon^3}}{200{\eta}^2R^{3/2}}>-\frac{\sqrt{2{\eta}\mu^3}}{24{\eta}}=-\frac{\sqrt{10\varepsilon^3}}{600{\eta}^2R^{3/2}},\nonumber
\\
\phi(x^{t+2})-\phi(x^{t+1}) > -\frac{\sqrt{\varepsilon^3}}{200{\eta}^2R^{3/2}},\nonumber
\end{align}
have never been satisfied, then, we obtain a reduction in the potential function:
\begin{align}
\phi(x^{t*})-\phi(x^0)\leq -\frac{\sqrt{\varepsilon^3}(t^*-1)}{400{\eta}^2R^{3/2}}=-f(x^0)+f^*-O(1)+\varepsilon.
\end{align}
Then combined with  \eqref{approximate analytic center here}, it holds that
\begin{align}
f(x^{t*})-f(x^0)-O(1)\leq& -\frac{\sqrt{\varepsilon^3}(t^*-1)}{400{\eta}^2R^{3/2}}=-f(x^0)+f^*-O(1)+\varepsilon\nonumber
\\&\Longrightarrow f(x^{t*})-f^*\leq \varepsilon.
\end{align}

Otherwise, the algorithm terminates before $t^*$ and achieves a solution that satisfies \begin{align}
\begin{split}
\nabla f(x^{t+1})-\mathbf A^\top \hat y>-\frac{\mu }{2}>-\varepsilon,
\\
\vert x_i^{t+1}(\nabla f(x^{t+1})-A^\top \hat y)_i\vert \leq 4\mu+\mu R\leq \varepsilon,~~\forall i,
\end{split}
\end{align} according to Step 2. Furthermore, from Step 4, the satisfaction of the termination criteria also implies 
\begin{align}&\left(X_{t+1}\nabla^2f(x^{t+1})X_{t+1}+\sqrt{2\mu{\eta}}I\right)_{AX_{t+1}}\succeq 0\nonumber
\\\Longrightarrow&\left(X_{t+1}\nabla^2f(x^{t+1})X_{t+1}+\sqrt{\varepsilon} I\right)_{AX_{t+1}}\succeq 0,\nonumber
\end{align}
thus immediately leads to the desired result.
\end{proof}

Consider the same algorithm procedure as in the second-order ITRP. If the regularity on $f$ is relaxed from Assumption 4 to Assumption 5, then we may still obtain an approximate KKT condition. Nonetheless, such an approximation is in a critically weaker form. Specifically, we have the following theorem. In this case, we have a slightly different termination criterion: we elect to terminate the second-order ITRP whenever the following criteria hold:
\begin{align}
\phi(x^{t+1})-\phi(x^t) > -\frac{\sqrt{\varepsilon^3}}{200{\eta}^2},\nonumber
\\
\phi(x^{t+2})-\phi(x^{t+1}) > -\frac{\sqrt{\varepsilon^3}}{200{\eta}^2}.\nonumber
\end{align}
Once the algorithm terminates, it outputs $x^{t+2}$ as our final solution.

\begin{theorem}\label{theorem 3.4}
Suppose that Assumptions 3.(b) and 3.(c) and 5 hold. Denote by $f^*$ the global minimal value of the objective function $f$ on $\Omega$. Consider Algorithm \ref{Algorithm 1} with second-order ITRP per-iteration problem. For any $\varepsilon\in\left(0,\,\min\left\{10{\eta}^2 r^2,\,\frac{1}{2}\right\}\right]$,  let $\mu:=\frac{\varepsilon}{5{\eta}}$,  $\beta:=\mu^{1/2}{\eta}^{-1/2}/\sqrt{2}$, and $t^*:=\left\lceil\frac{400{\eta}^2\left(f(x^0)-f^*+O(1)-\varepsilon\right)\left(2{\eta}+4\varepsilon\right)}{\sqrt{\varepsilon^3}}+1\right\rceil$. The algorithm  terminates before the $t^*$-th iteration at a feasible solution $\hat x$ that satisfies that 
\begin{align}
\begin{split}
&\hat x>0,\quad\Vert diag(\hat x)(\nabla f(\hat x) +{\mathbf A}^\top \hat y)\Vert_\infty\leq \varepsilon,
\\
&d^\top\left(diag(\hat x)\nabla^2f(\hat x)diag(\hat x)+\sqrt{\varepsilon} I\right)d\geq 0,\quad\forall d:\, \mathbf A diag(\hat x) d=0.
\end{split}\label{KKT condition approximate weaker achieved old}
\end{align}
 Otherwise, it holds that $f(x^{t^*})-f^*\leq\varepsilon.$
\end{theorem}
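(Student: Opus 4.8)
The plan is to follow the same three-to-five step structure as the proof of Theorem \ref{theorem 3.3}, adapting each estimate to the weaker regularity in Assumption 5. First I would establish feasibility, $x^t\in\Omega^\circ$ for all $t$, exactly as in Step 1 there, since this argument depends only on the trust-region constraint $\|d^t\|\le\beta<1$ and the equality constraint $\mathbf A X_t d^t=0$, neither of which uses any smoothness of $f$. Next I would run the per-iteration descent analysis: using the Taylor-type bound \eqref{taylor expansion 2} (which is identical in both assumptions) together with Lemma \ref{lemma1}, I would derive the same inequality \eqref{descent inequality here} controlling $\phi(x^{t+1})-\phi(x^t)$ by $\nabla\phi(x^t)^\top X_t d^t+\tfrac12(d^t)^\top X_t\nabla^2f(x^t)X_t d^t+(\tfrac{\eta}{3}\beta+\mu)\beta^2$, and then invoke the global trust-region optimality conditions \eqref{second order condition per-iteration} and the algebraic identity \eqref{kkt used here now} to conclude that failure of the first termination inequality forces $\tfrac12\lambda^t\|d^t\|^2$ to be small, giving via Cases 1 and 2 the bounds $\|p(x^t,y^t)\|<\mu$ and hence the scaled first-order estimate $\|X_t\nabla^2 f(x^t)X_t d^t-X_t\mathbf A^\top y^t+X_t\nabla f(x^t)\|_\infty\le 2\mu$.

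The crucial divergence from Theorem \ref{theorem 3.3} occurs in the step analogous to Step 3, where one passes from the iterate-$t$ estimate involving $\nabla^2 f(x^t)X_t d^t$ to a genuine first-order condition at $x^{t+1}$. In the twice-differentiable-on-the-boundary case of Theorem \ref{theorem 3.3} one uses the mean value theorem together with the Lipschitz bound \eqref{mean value theorem} on $X\nabla^2 f$ to upgrade the scaled stationarity into the two-sided sign condition $\nabla f(x^{t+1})-\mathbf A^\top\hat y>-\varepsilon$. Under Assumption 5 only the doubly-scaled quantity $X\nabla^2 f(\cdot)X$ is Lipschitz in $d$, so the analogue of \eqref{mean value theorem} controlling the unscaled or singly-scaled Hessian difference is no longer available. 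This is exactly why the conclusion \eqref{KKT condition approximate weaker achieved old} is stated without the lower-bound inequality $\nabla f(\hat x)+\mathbf A^\top\hat y>-\varepsilon$: I expect this missing sign condition to be the main obstacle, and the resolution is simply to drop that component of the claim and prove only the scaled first-order bound $\|diag(\hat x)(\nabla f(\hat x)+\mathbf A^\top\hat y)\|_\infty\le\varepsilon$, which survives because it is preserved under the available doubly-scaled Lipschitz control.

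For the second-order part I would carry out the analogue of Step 4: invoking \eqref{lambda}, \eqref{case 1 result 2nd}, and \eqref{second order condition per-iteration} at index $t+1$, the global optimality condition yields $(X_{t+1}\nabla^2 f(x^{t+1})X_{t+1}+\tfrac{\|p(x^{t+1},y^{t+1})\|}{\beta}I)_{AX_{t+1}}\succeq 0$, and failure of the second termination inequality forces $\|p(x^{t+1},y^{t+1})\|<\mu$, whence $\tfrac{\mu}{\beta}=\sqrt{2\mu\eta}$ bounds the regularization and, after substituting $\mu:=\tfrac{\varepsilon}{5\eta}$ and comparing $\sqrt{2\mu\eta}\le\sqrt\varepsilon$, delivers the desired $(X_{t+1}\nabla^2 f(x^{t+1})X_{t+1}+\sqrt\varepsilon I)_{AX_{t+1}}\succeq 0$. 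Note that because the final output is $x^{t+2}$ rather than $x^{t+1}$, I must be careful to align the index bookkeeping so that the first-order estimate at $x^{t+2}$ and the second-order estimate are both guaranteed by the two termination inequalities; here the absence of a mean-value upgrade actually simplifies matters, since the scaled first-order bound holds directly at the terminated iterate without needing to propagate across one extra step.

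Finally, Step 5 is the bookkeeping that converts per-iteration potential decrease into the iteration count $t^*$. Using the approximate-analytic-center property \eqref{approximate analytic center here}, I would telescope the guaranteed decrease $\phi(x^{t+1})-\phi(x^t)\le -\tfrac{\sqrt{\varepsilon^3}}{200\eta^2}$ over all non-terminating iterations, bound $\phi(x^{t^*})-\phi(x^0)$ below by $f(x^{t^*})-f^*-O(1)$, and conclude that after at most $t^*$ iterations either the termination criteria hold (yielding \eqref{KKT condition approximate weaker achieved old}) or $f(x^{t^*})-f^*\le\varepsilon$. The only computational care needed is verifying the constant chasing for the new scaling $\mu:=\tfrac{\varepsilon}{5\eta}$ (without the extra factor $R$ present in Theorem \ref{theorem 3.3}, reflecting that Assumption 5's doubly-scaled Lipschitz constant already absorbs the magnitude of $X$), which I expect to go through with the stated choices of $\beta$ and $t^*$.
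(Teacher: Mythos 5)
Your proposal is correct and follows essentially the same five-step route as the paper's own proof: feasibility of the iterates, the trust-region descent and global-optimality analysis yielding the scaled condition at iterate $t$, the mean-value propagation (in the doubly-scaled form that Assumption 5 permits) to the scaled first-order bound at the next iterate, the shifted-index second-order positive-semidefiniteness bound via the multiplier $\lambda^{t+1}$, and the telescoped potential decrease with $\mu=\varepsilon/(5\eta)$ and no factor $R$. Your diagnosis of why the sign condition $\nabla f(\hat x)+\mathbf A^\top\hat y>-\varepsilon$ must be dropped is exactly the paper's reason (the paper's Step 3 even retains a vestigial display asserting a lower bound that Assumption 5 does not justify); the only slight inaccuracy is your aside that no mean-value propagation is needed at the terminated iterate, since the doubly-scaled mean-value estimate is still required to pass from the condition involving $\nabla^2 f(x^t)X_t d^t$ to a genuine first-order bound at $x^{t+1}$.
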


\begin{proof} 
{\bf Step 1.} Following Step 1 of the proof for Theorem \ref{first order proof}, it is straightforward that $x^t\in\Omega^\circ$ for all $t\geq 1$. 

{\bf Step 2.} Following Step 2 of the proof for Theorem \ref{theorem 3.3}, it is also evident that, if $\phi(x^{t+1})-\phi(x^t) > -\frac{\sqrt{2{\eta}\mu^3}}{24{\eta}}$ then
$0\le x_i(\nabla f(x^t)+\nabla^2f(x^t)d^t-A^\top y^t)_i\le 2\mu,\ \forall i,$ for $\beta:=\mu^{1/2}{\eta}^{-1/2}/\sqrt{2}$.

{\bf Step 3.} We would like to show that once it holds that 
\begin{align} 0\le x_i(\nabla f(x^t)+\nabla^2f(x^t)d^t-A^\top y^t)_i\le 2\mu,\ \forall i.\label{t indicator 1 new}
\end{align}
then, it holds that, for some $\hat y\in\R^m$:
\begin{align}
\vert x_i^{t+1}(\nabla f(x^{t+1})-A^\top \hat y)_i\vert \leq 5\mu,~~\forall i.\label{first order kkt new}
\end{align}

To that end, notice that, since $x^t,\,x^{t+1}\in\Omega^\circ$, from mean value theorem, it holds that, for some $\tau\in[0,\,1]$,
\begin{align}
&\nabla f(x^{t+1})-\nabla f(x^t)=\nabla^2 f(\tau(x^{t+1}-x^t)+x^t)(x^{t+1}-x^{t})\nonumber
\\=&\nabla^2 f(\tau(x^{t+1}-x^t)+x^t)X_td^t,\nonumber
\end{align}
and thus
\begin{align}
&\Vert X_t\nabla f(x^{t+1})-X_t\nabla f(x^t)-X_t\nabla^2 f(x^t) X_td^t\Vert\nonumber
\\=&\Vert X_t\left(\nabla^2 f(x^t)- \nabla^2 f(\tau(x^{t+1}-x^t)+x^t)\right)X_td^t\Vert\nonumber
\\=&\Vert X_t\left(\nabla^2 f(X_t e)- \nabla^2 f(X_t(\tau d^t+e))\right)X_t\Vert \Vert d^t\Vert\nonumber
\\\leq& {\eta} \tau\Vert d^t\Vert^2\leq  {\eta} \Vert d^t\Vert^2\leq {\eta} \beta^2,\label{mean value theorem new}
\end{align}
where the last line is due to Assumption 5, combined with $\Vert d\Vert\leq \beta<r$ and $x^t,\, x^{t+1}\in\Omega^\circ$,
which will be useful soon afterwards.

Combining \eqref{t indicator 1} with \eqref{mean value theorem new}, we have that
\begin{align}
&\nabla f(x^{t+1})-\mathbf A^\top y^t\nonumber
\\
\geq & \nabla^2f(x^t)X_td^t-\mathbf A^\top y^t+ \nabla f(x^t)-\Vert \nabla f(x^{t+1})-\nabla f(x^t)-\nabla^2 f(x^t) X_td^t\Vert_\infty\nonumber
\\\geq &-{\eta} \beta^2=-\frac{\mu}{2}.\nonumber
\end{align}

Meanwhile, combining  \eqref{t indicator 1 new} with \eqref{mean value theorem new}, it obtains that
\begin{align}
&\vert x_i^{t+1}(\nabla f(x^{t+1})-A^\top y^t)_i\vert \nonumber
\\\leq & \vert (1+d_i^t)x_i^t(\nabla f(x^t)+\nabla^2f(x^t)d^t-A^\top y^t)_i\vert\nonumber
\\&+\vert 1+d_i^t\vert\cdot\Vert X_t\nabla f(x^{t+1})-X_t\nabla f(x^t)-X_t\nabla^2 f(x^t) X_td^t\Vert_\infty\nonumber
\\\leq &(1+\beta)(2\mu+{\eta} )\beta^2 \leq (1+\beta)\left(2\mu+\frac{\mu }{2}\right)\leq 5\mu.\nonumber
\end{align}
The last line is due to $\vert 1+d_i^t\vert\leq (1+\beta)\leq 2$.

{\bf Step 4.}  We would like to show that, if $\phi(x^{t+2})-\phi(x^{t+1})>-\frac{\sqrt{2{\eta}\mu^3}}{24{\eta}}$, then $(X_{t+1}\nabla^2f(x^{t+1})X_{t+1}+\sqrt{2\mu{\eta}} I)_{AX_{t+1}}\succeq 0$. To this end, we invoke \eqref{lambda} (where we let $t:=t+1$), \eqref{case 1 result 2nd} (where we let $t:=t+1$), and \eqref{second order condition per-iteration} (where we let $t:=t+1$). The combination of the three results in 
\begin{align}
\left(X_{t+1}\nabla^2f(x^{t+1})X_{t+1}+\frac{\|p(x^{t+1},y^{t+1})\|}{\beta} I\right)_{AX_{t+1}}\succeq 0.\label{second order approximate second}
\end{align}
Further observe that from Step 2, it holds that, if $\phi(x^{t+2})-\phi(x^{t+1})>-\frac{\sqrt{2{\eta}\mu^3}}{24{\eta}}$, then $\frac{\|p(x^{t+1},y^{t+1})\|}{\beta}\leq \frac{\mu}{\beta}=\sqrt{2\mu{\eta}}$. Combined with \eqref{second order approximate second}, we have the claimed result in this step.

{\bf Step 5.} This step summarizes the above steps and prove the claimed results of the theorem.

We recall here $x^0$ is the approximate analytic center that satisfies 
\begin{align}
-\sum_{i=1}^n\log(x_i^t)\ge -\sum_{i=1}^n\log(x_i^0)-O(1),\label{approximate analytic center here}
\end{align}
where $O(1)$ is a constant.

We know that
 at iteration $t^*=\frac{400{\eta}^2\left(f(x^0)-f^*+O(1)-\varepsilon\right)\left(2{\eta}+4\varepsilon\right)}{\sqrt{\varepsilon^3}}+1$, where $O(1)$ is the same number as in \eqref{approximate analytic center here} if the termination criteria of simultaneously satisfying
\begin{align}
\phi(x^{t+1})-\phi(x^t) > -\frac{\sqrt{\varepsilon^3}}{200{\eta}^2}>-\frac{\sqrt{2{\eta}\mu^3}}{24{\eta}}=-\frac{\sqrt{10\varepsilon^3}}{600{\eta}^2},\nonumber
\\
\phi(x^{t+2})-\phi(x^{t+1}) > -\frac{\sqrt{\varepsilon^3}}{200{\eta}^2},\nonumber
\end{align}
have never been satisfied. Then, we obtain a reduction in the potential function:
\begin{align}
\phi(x^{t*})-\phi(x^0)\leq -\frac{\sqrt{\varepsilon^3}(t^*-1)}{400{\eta}^2}=-f(x^0)+f^*-O(1)+\varepsilon.
\end{align}
Then combined with  \eqref{approximate analytic center here}, it holds that
\begin{align}
f(x^{t*})-f(x^0)-O(1)\leq& -\frac{\sqrt{\varepsilon^3}(t^*-1)}{400{\eta}^2}=-f(x^0)+f^*-O(1)+\varepsilon\nonumber
\\\Longrightarrow &f(x^{t*})-f^*\leq \varepsilon.\nonumber
\end{align}

Otherwise, the algorithm terminates before $t^*$ and achieves a solution that satisfies \begin{align}
\vert x_i^{t+1}(\nabla f(x^{t+1})-A^\top \hat y)_i\vert \leq 5\mu\leq \varepsilon,~~\forall i,
\end{align} according to Step 2. Furthermore, from Step 4, the satisfaction of the termination criteria also implies 
\begin{align}\left(X_{t+1}\nabla^2f(x^{t+1})X_{t+1}+\sqrt{2\mu{\eta}}I\right)_{AX_{t+1}}\succeq 0\nonumber
\\
\Longrightarrow\left(X_{t+1}\nabla^2f(x^{t+1})X_{t+1}+\sqrt{\varepsilon} I\right)_{AX_{t+1}}\succeq 0,\nonumber
\end{align}
thus immediately leads to the desired result.
\end{proof}

\begin{remark}
We observe that even though \eqref{KKT condition approximate weaker achieved old} is a weaker condition than the desired one in this paper, it still applies to application problems such as the non-Lipschitz problem formulation of sparse optimization discussed by \cite{bian}, who provide a different algorithm with the same complexity for a special case that satisfies all our assumptions.
\end{remark}

We now consider a special case where  substantially faster iteration complexity can be achieved. Such a result is, in fact, first presented by \cite{Ye98} for achieving an approximate first-order KKT point for linearly constrained nonconvex quadratic program. The complexity in the approximation to the second-order necessary condition has not been explicitly stated, though a closer look at the results therein may find it an immediate result from the paper.  In the following, we provide an alternative proof for the complexity analysis, which results in some new insights in solving this type of problem. We elect to terminate the second-order ITRP whenever the following criteria hold:
\begin{align}
\phi(x^{t+1})-\phi(x^t) > -\frac{\varepsilon}{32},\nonumber
\\
\phi(x^{t+2})-\phi(x^{t+1}) > -\frac{\varepsilon}{32}.\nonumber
\end{align}
Once the algorithm terminates, it outputs $x^{t+2}$ as our final solution.

\begin{theorem}\label{theorem 3.5}
Suppose that Assumptions 3.(b), 3.(c) and 6 hold. Denote by $f^*$ the global minimal value of the objective function $f$ on $\Omega$. Consider Algorithm \ref{Algorithm 1} with second-order ITRP per-iteration problem. For any $\varepsilon\in\left(0,\,\min\left\{10{\eta}^2 r^2,\,\frac{1}{2}\right\}\right]$,  let $\mu:=\frac{\varepsilon}{4}$,  $\beta:=1/4$, and $t^*:=\left\lceil\frac{64(f(x^0)-f^*+O(1)-\varepsilon)+1}{\epsilon}\right\rceil$, the algorithm  terminates before the $t^*$-th iteration at an $\varepsilon$-KKT2 point, more precisely, at a feasible solution $\hat x$ that satisfies that 
\begin{align}
\begin{split}
&\hat x>0,\quad\nabla f(\hat x)-\mathbf A^\top \hat y>0; \quad\Vert diag(\hat x)(\nabla f(\hat x) +{\mathbf A}^\top \hat y)\Vert_\infty\leq \varepsilon,
\\
&d^\top\left(diag(\hat x)\nabla^2f(\hat x)diag(\hat x)+{\varepsilon} I\right)d\geq 0,\quad\forall d:\, \mathbf A diag(\hat x) d=0.
\end{split}\label{KKT condition approximate weaker achieved}
\end{align}
 Otherwise, it holds that $f(x^{t^*})-f^*\leq\varepsilon.$
\end{theorem}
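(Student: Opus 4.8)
The plan is to follow the three-part template of the proofs of Theorems \ref{theorem 3.3} and \ref{theorem 3.4}, specializing every estimate to the quadratic case $\eta=0$. The quadratic hypothesis collapses all cubic remainder terms to zero, and this is precisely what upgrades the trust-region radius from $\beta\sim\sqrt{\mu/\eta}$ to the constant $\beta=1/4$, the second-order perturbation from $\sqrt\varepsilon$ to $\varepsilon$, and the complexity from $O(\varepsilon^{-3/2})$ to $O(\varepsilon^{-1})$. First I would reproduce Step 1 verbatim: since $\|d^t\|\le\beta<1$ and $\mathbf AX_td^t=0$, the update $x^{t+1}=x^t+X_td^t$ preserves both $x^{t+1}>0$ and $\mathbf Ax^{t+1}=\mathbf b$, so $x^t\in\Omega^\circ$ for all $t$.

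For the per-iteration descent I would exploit that, with $f$ quadratic, $f(X_t(e+d^t))=f(x^t)+\langle X_t\nabla f(x^t),d^t\rangle+\frac12(d^t)^\top X_t\nabla^2f(x^t)X_td^t$ holds with equality; combining this with Lemma \ref{lemma1} for the barrier term gives $\phi(x^{t+1})-\phi(x^t)\le\nabla\phi(x^t)^\top X_td^t+\frac12(d^t)^\top X_t\nabla^2f(x^t)X_td^t+\mu\beta^2$ with no $\frac{\eta}{3}\beta^3$ correction. Running the identical manipulation of the trust-region optimality conditions \eqref{second order condition per-iteration}--\eqref{lambda} that produced \eqref{kkt used here now} then yields $\phi(x^{t+1})-\phi(x^t)\le-\frac12\lambda^t\|d^t\|^2+\mu\beta^2$. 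Splitting into $\|d^t\|<\beta$ (forcing $\lambda^t=0$) and $\|d^t\|=\beta$ (where $\|p(x^t,y^t)\|=\lambda^t\beta$), and inserting $\mu=\varepsilon/4$, $\beta=1/4$, I would show that whenever the per-step decrease stays above the termination threshold we must have $\|p(x^t,y^t)\|<\mu$, which simultaneously gives the sign condition $\nabla^2f(x^t)X_td^t-\mathbf A^\top y^t+\nabla f(x^t)>0$ and the scaled bound $0\le x_i^t(\nabla f(x^t)+\nabla^2f(x^t)X_td^t-\mathbf A^\top y^t)_i\le2\mu$.

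The two remaining ingredients are where the quadratic hypothesis pays off. For the first-order conclusion I would use that a constant Hessian makes the mean-value identity exact, $\nabla f(x^{t+1})=\nabla f(x^t)+\nabla^2f(x^t)X_td^t$, so substituting into the Step~2 bound and rescaling each coordinate by $1+d_i^t$ (to pass from $x_i^t$ to $x_i^{t+1}$) transfers the estimate to the output iterate with none of the $\eta\beta^2$ or $\eta R\beta^2$ loss seen in Theorem \ref{theorem 3.3}, delivering $\nabla f(\hat x)-\mathbf A^\top\hat y>0$ and $\|diag(\hat x)(\nabla f(\hat x)-\mathbf A^\top\hat y)\|_\infty\le\varepsilon$. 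For the second-order conclusion I would read off the curvature part of \eqref{second order condition per-iteration} at the output iterate, namely $d^\top(X\nabla^2f(x)X+\lambda\mathcal{I})d\ge0$ for all $d$ with $\mathbf AXd=0$, and bound $\lambda=\|p\|/\beta<\mu/\beta=\varepsilon$ from the small decrease; because $\beta$ is a fixed constant rather than $O(\sqrt\varepsilon)$, this gives the sharp perturbation $+\varepsilon\mathcal{I}$ instead of $+\sqrt\varepsilon\mathcal{I}$. Finally I would close with the telescoping argument of Step~5: if termination never fires before $t^*$, each checked step lowers $\phi$ by at least a fixed multiple of $\varepsilon$, and the approximate-analytic-center bound $-\sum_i\log x_i^t\ge-\sum_i\log x_i^0-O(1)$ converts the accumulated potential drop into $f(x^{t^*})-f^*\le\varepsilon$; matching the drop per step against $f(x^0)-f^*+O(1)$ produces the stated $t^*=O(\varepsilon^{-1})$.

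I expect the only real friction to be constant-chasing: the termination threshold together with $\mu$ and $\beta$ must be tuned so that a single inequality of order $\varepsilon$ forces both $\|p\|<\mu$ (needed for the strict dual feasibility $\nabla f(\hat x)-\mathbf A^\top\hat y>0$ and for the first-order $\varepsilon$-bound after the $1+d_i^t$ rescaling) and $\lambda\le\varepsilon$ (needed for the exact $+\varepsilon\mathcal{I}$ perturbation) at the common output iterate, and so that the ``both consecutive decreases are small'' stopping rule still yields a $\Theta(\varepsilon)$ decrease on every non-terminating step for the counting argument. No new analytic idea beyond Theorems \ref{theorem 3.3}--\ref{theorem 3.4} appears to be required once $\eta=0$ erases the cubic terms.
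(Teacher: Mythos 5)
Your proposal follows essentially the same route as the paper's own proof: Step 1 feasibility, the exact quadratic expansion plus Lemma \ref{lemma1} giving $\phi(x^{t+1})-\phi(x^t)\le-\frac12\lambda^t\|d^t\|^2+\mu\beta^2$, the case split on $\|d^t\|<\beta$ versus $\|d^t\|=\beta$ forcing $\|p(x^t,y^t)\|<\mu$, the exact identity $\nabla f(x^{t+1})=\nabla f(x^t)+\nabla^2f(x^t)X_td^t$ (replacing the mean-value remainder of Theorems \ref{theorem 3.3}--\ref{theorem 3.4}) to transfer the bounds to the output iterate, the second-order conclusion from $\lambda^{t+1}=\|p(x^{t+1},y^{t+1})\|/\beta\le\mu/\beta=4\mu=\varepsilon$, and the telescoping potential-reduction count yielding $t^*=O(\varepsilon^{-1})$. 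You also correctly identify the only delicate point (constant-chasing between the termination threshold, $\mu=\varepsilon/4$, and $\beta=1/4$), so the plan is sound and matches the paper.
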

\begin{proof}
{\bf Step 1.} Following Step 1 of the proof for Theorem \ref{first order proof}, it is straightforward that $x^t\in\Omega^\circ$ for all $t\geq 1$. 

{\bf Step 2.}  We would like to show that if $\phi(x^{t+1})-\phi(x^t) > -\frac{\mu}{16}$ then
$0\le x_i(\nabla f(x^t)+\nabla^2f(x^t)d^t-A^\top y^t)_i\le 2\mu,\ \forall i,$ for $\beta:=1/4$.

Following Step 2 of the proof for Theorem \ref{theorem 3.3}, while noticing that $\eta=0$, we can show that it is also evident that,
\begin{align}
\phi(x^{t+1})-\phi(x^t)\le -\frac{1}{2}\lambda^t\|d^t\|^2+\mu\beta^2.
\end{align}

{\bf Case 1.} If  $\Vert d^t\Vert<\beta$, it then holds that $\lambda^t=0$. As a result, condition \eqref{second order condition per-iteration} yields that
\begin{align}
X_t\nabla^2f(x^t)X_td^t-X_t\mathbf A^\top y^t+X_t \nabla\phi(x^t)=0;\quad (X_t\nabla^2f(x^t)X_t)_{AX_t}\succeq 0.\label{case 1 result 2nd}
\end{align}
Thus, it holds that
\begin{align}
\Vert X_t\nabla^2f(x^t)X_td^t-X_t\mathbf A^\top y^t+X_t\nabla f(x^t)\Vert_\infty = \mu< 2\mu,
\end{align}
and 
\begin{align}
\nabla^2f(x^t)X_td^t-\mathbf A^\top y^t+\nabla f(x^t)>0.
\end{align}


{\bf\indent Case 2.} If $\Vert d^t\Vert=\beta$, then $\|p(x^t,y^t)\|=\lambda^t\beta$. Combined with $\mu\beta^2=\frac{\mu}{16}$, it holds that
\begin{align}
\frac{\mu}{8}&> \frac{1}{2}\lambda^t\|d^t\|^2=\frac{1}{2}\lambda^t\beta^2 = \frac{1}{2}\beta\|p(x^t,y^t)\|=\frac{1}{8}\|p(x^t,y^t)\|.\nonumber
\end{align}
which means that $\|p(x^t,y^t)\|< \mu$, that is, 
\begin{align}
\mu>& \|X_t\nabla^2f(x^t)X_td^t-X_t\mathbf A^\top y^t+X_t \nabla\phi(x^t)\|_{\infty} \nonumber
\\=&\|(X_t\nabla^2f(x^t)X_td^t-X_t\mathbf A^\top y^t+X_t \nabla f(x^t)) -\mu e\|_{\infty},\nonumber
\end{align}
which implies
\[\nabla^2f(x^t)X_td^t-\mathbf A^\top y^t+ \nabla f(x^t)>0,\]
and
\[0\le x_i(\nabla f(x^t)+\nabla^2f(x^t)d^t-A^\top y^t)_i\le 2\mu,\ \forall i.\]

Combining Cases 1 and  2, we have the desired result in Step 2.

{\bf Step 3.} We would like to show that once it holds that 
\begin{align}&\nabla^2f(x^t)X_td^t-\mathbf A^\top y^t+ \nabla f(x^t)>0; \nonumber
\\
\text{and} \quad &0\le x_i(\nabla f(x^t)+\nabla^2f(x^t)d^t-A^\top y^t)_i\le 2\mu,\ \forall i,\label{t indicator 1 new 2}
\end{align}
then, it simultaneously holds that, for some $\hat y\in\R^m$:
\begin{align}
\begin{split}
\nabla f(x^{t+1})-\mathbf A^\top \hat y>0,
\\
\vert x_i^{t+1}(\nabla f(x^{t+1})-A^\top \hat y)_i\vert \leq \mu,~~\forall i.
\end{split}\label{first order kkt approximate}
\end{align}

To that end, notice that, due to Assumption 7, 
\begin{align}
\nabla f(x^{t+1})-\nabla f(x^t)=\nabla^2 f(x^t) X_td^t.\label{niube}
\end{align}

Combining \eqref{t indicator 1 new 2} with \eqref{niube}, we have that
\begin{align}
\nabla f(x^{t+1})-\mathbf A^\top y^t&=  \nabla^2f(x^t)X_td^t-\mathbf A^\top y^t+ \nabla f(x^t)>0.\nonumber
\end{align}

Meanwhile, combining  \eqref{t indicator 1 new 2} with \eqref{niube}, it obtains that
\begin{align}
&\vert x_i^{t+1}(\nabla f(x^{t+1})-A^\top y^t)_i\vert \nonumber
\\\leq & \vert (1+d_i^t)x_i^t(\nabla f(x^t)+\nabla^2f(x^t)d^t-A^\top y^t)_i\vert\nonumber
\\\leq &2\mu(1+\beta)\beta^2\leq \mu .\nonumber
\end{align}
The last line is due to $\vert 1+d_i^t\vert\leq (1+\beta)\leq 2$.

{\bf Step 4.}  We would like to show that, if $\phi(x^{t+2})-\phi(x^{t+1})>-\frac{\mu}{16}$, then $(X_{t+1}\nabla^2f(x^{t+1})X_{t+1}+4\mu I)_{AX_{t+1}}\succeq 0$. To this end, we invoke \eqref{lambda} (where we let $t:=t+1$), \eqref{case 1 result 2nd} (where we let $t:=t+1$), and \eqref{second order condition per-iteration} (where we let $t:=t+1$). The combination of the three results gives 
\begin{align}
\left(X_{t+1}\nabla^2f(x^{t+1})X_{t+1}+\frac{\|p(x^{t+1},y^{t+1})\|}{\beta} I\right)_{AX_{t+1}}\succeq 0.\label{second order approximate second}
\end{align}
Further observe that from Step 2, it holds that, if $\phi(x^{t+2})-\phi(x^{t+1})>-\frac{\mu}{16}$, then $\frac{\|p(x^{t+1},y^{t+1})\|}{\beta}\leq \frac{\mu}{\beta}=4\mu$. Combined with \eqref{second order approximate second}, we have the claimed result in this step. The rest of the proof is straightforward following Step 5 of the proof for Theorem \ref{theorem 3.3}, while we let $\mu:=\frac{\varepsilon}{4}$ and $t^*:=\frac{64(f(x^0)-f^*+O(1)-\varepsilon)+1}{\epsilon}$.
\end{proof}

\begin{remark}
We notice the substantial improvement in the iteration complexity: If $f$ is quadratic, the complexity in achieving an $\varepsilon$-perturbed first-order and second-order stationary point is both $O(\varepsilon^{-1})$, while for the same algorithm to solve a more general problem, our complexity estimates are $O(\varepsilon^{-3/2})$ and $O(\varepsilon^{-3})$ for the first-order and second-order stationary points, respectively. The cause of this gap, to our understanding, is whether the cubic error term is present in the Taylor expansion-like inequalities \eqref{taylor expansion 1} and \eqref{taylor expansion 2}, or namely, whether $\eta=0$ holds. Note that when the $p$-th order derivative is used to find a first-order stationary point with a more general set of convex constraints, the best known iteration complexity is $O(\varepsilon^{-(p+1)/p})$ \cite{birginmp,birginsiam} (but with a costly per-iteration complexity). The quadratic case here discussed is compatible with this result as a limiting case $p\to+\infty$.
\end{remark}

\begin{remark}
In all three cases of discussion above, the per-iteration problem of the second-order ITRP admits a bisection scheme as per \cite{Ye98,ye1992affine} with a ``log-log'' (quadratic) rate of complexity.
\end{remark}

\section{Conclusion}
In this paper we consider the minimization of a continuous function that is potentially not differentiable or not twice-differentiable on the boundary of the feasible region. To characterize computable stationary points, we present suitable first- and second-order optimality conditions for this problem that generalizes to classical ones when the derivative on the boundary is available, through the use of an interior point technique. As a result, such an optimality condition is stronger than the existing conditions commonly used in the literature. We further develop new interior trust-region point algorithms and present  their worst-case complexity estimates to solve the special but important case with linear constraints. Even with a weaker regularity  on the objective function, the presented algorithms are theoretically guaranteed to yield a stronger optimality condition at the same best known complexity rates in the literature for first- and second-order stationarity using first- and second-order derivatives. We believe that this approach can be generalized for non-linear constraints and for infeasible initialization. Also, solving a higher-order subproblem, we believe this approach can yield iteration complexity results for finding $q$-th order stationary points, extending the results from \cite{beyond}.

\section*{Acknowledgement} This work was supported by the S\~ao Paulo Research Foundation (FAPESP grants 2013/05475-7 and 2016/02092-8) and the Brazilian National Council for Scientific and Technological Development (CNPq).
The content is solely the responsibility of the authors and does not necessarily represent
the official views of  the FAPESP and  CNPq. \jo{This research was conducted while the first author is holding a Visiting Scholar position at Department of Management Science and Engineering, Stanford University, Stanford CA 94305, USA.}

\bibliographystyle{plain}

\begin{thebibliography}{10}

\bibitem{Audet}
Audet, C.,  Dennis Jr., J.E.:  Mesh adaptive direct search algorithms for constrained optimization.
SIAM J. Optim. 17: 188-217 (2006)

\bibitem{agar}
Agarwal, N., Allen-Zhu,  Z.,  Bullins, B., Hazan, E., Ma, T.:
\newblock Finding local minima for nonconvex optimization in linear time.
\newblock {\em arXiv:1611.01146}.
\newblock (2016)

\bibitem{akkt}
Andreani, R., Haeser, G.,  Martinez, J. M.:
\newblock On sequencial optimality conditions for smooth constrained
  optimization.
\newblock {\em Optimization}, 60(5):627--641 (2011)

\bibitem{akkt2}
Andreani, R., Haeser, G.,  Ramos, A., Silva,  P.~J.~S.:
\newblock A second-order sequential optimality condition associated to the
  convergence of optimization algorithms.
\newblock {\em IMA Journal of Numerical Analysis},
\newblock DOI: 10.1093/imanum/drw064  (2017)

\bibitem{rcpld}
Andreani, R., Haeser, G., Schuverdt, M.~L.,  Silva,  P.~J.~S.:
\newblock A relaxed constant positive linear dependence constraint
  qualification and applications.
\newblock {\em Mathematical Programming}, 135:255--273 (2012)

\bibitem{cpg}
Andreani, R.,  Mart\'{\i}nez, J.~M., Ramos, A., Silva, P.~J.~S.:
\newblock Two new weak constraint qualifications and applications.
\newblock {\em SIAM Journal on Optimization}, 22:1109--1135 (2012)

\bibitem{ccp}
Andreani, R.,  Mart\'{\i}nez, J.~M., Ramos, A., Silva, P.~J.~S.:
\newblock A cone-continuity constraint qualification and algorithmic
  consequences.
\newblock {\em SIAM Journal on Optimization}, 26(1):96--110 (2016)

\bibitem{cakkt}
Andreani, R., Martinez,  J.~M.,  Svaiter,  B.~F.:
\newblock A new sequencial optimality condition for constrained optimization
  and algorithmic consequences.
\newblock {\em SIAM Journal of Optimization}, 20(6):3533--3554 (2010)

\bibitem{conjnino}
Behling, R.,  Haeser, G.,  Ramos, A.,  Viana, D. S.:
\newblock On a conjecture in second-order optimality conditions.
\newblock {\em Optimization Online} (2016)

\bibitem{bianMOR}
Bian, W., Chen, X.:
\newblock Optimality and complexity for constrained optimization problems with
  nonconvex regularization.
\newblock {\em to appear in Mathematics of Operations Research} (2016)

\bibitem{biansmooth}
Bian, W., Chen, X.:
\newblock Worst-case complexity of smoothing quadratic regularization methods
  for non-lipschitzian optimization.
\newblock {\em SIAM Journal on Optimization}, 23(3):1718--1741 (2013)

\bibitem{bianlinear}
Bian, W., Chen, X.:
\newblock Linearly constrained non-lipschitz optimization for image
  restoration.
\newblock {\em SIAM Journal on Imaging Sciences}, 8(4):2294--2322 (2015)

\bibitem{bian}
Bian,  W.,  Chen,  X.,  Ye, Y.:
\newblock Complexity analysis of interior point algorithms for non-{L}ipschitz
  and nonconvex minimization.
\newblock {\em Mathematical Programming}, 149(1):301--327 (2015)

\bibitem{birginmp}
Birgin, E. G.,  Gardenghi, J. L.,  Mart\'{\i}nez, J.~M., Santos, S.~A.,
  Toint,  Ph.~L.:
\newblock Worst-case evaluation complexity for unconstrained nonlinear
  optimization using high-order regularized models.
\newblock {\em Mathematical Programming},
\newblock DOI: 10.1007/s10107-016-1065-8 (2016)



\bibitem{auglag2}
Birgin, E. G.,  Haeser, G.,  Ramos, A.:
\newblock Augmented lagrangians with constrained subproblems and convergence to
  second-order stationary points.
\newblock {\em Optimization Online} (2016)

\bibitem{birginquad}
Birgin, E. G., Mart\'{\i}nez, J. M.:
\newblock Quadratic regularization with cubic descent for unconstrained
  optimization.
\newblock {\em Optimization Online}.
\newblock (2016)

\bibitem{oliver}
Carmon, Y. J., Duchi, C., Hinder, O., Sidford, A.:
\newblock Accelerated methods for nonconvex optimization.
\newblock {\em ArXiv: 1611.00756}.
\newblock (2016)

\bibitem{tointsecond}
Cartis, C., Gould,  N. I. M., Toint,  Ph. L.:
\newblock Adaptive cubic regularisation methods for unconstrained optimization.
  part ii: worst-case function- and derivative-evaluation complexity.
\newblock {\em Mathematical Programming}, 130(2):295--319 (2011)

\bibitem{tointcomposite}
Cartis, C., Gould,  N. I. M., Toint,  Ph. L.:
\newblock On the evaluation complexity of composite function minimization with
  applications to nonconvex nonlinear programming.
\newblock {\em SIAM Journal on Optimization}, 21(4):1721--1739 (2011)

\bibitem{tointconst}
Cartis, C., Gould,  N. I. M., Toint,  Ph. L.:
\newblock On the complexity of finding first-order critical points in
  constrained nonlinear optimization.
\newblock {\em Mathematical Programming}, 144(1):93--106 (2014)

\bibitem{tointconst2}
Cartis, C., Gould,  N. I. M., Toint,  Ph. L.:
\newblock On the evaluation complexity of constrained nonlinear least-squares
  and general constrained nonlinear optimization using second-order methods.
\newblock {\em SIAM Journal on Numerical Analysis}, 53(2):836--851 (2015)

\bibitem{beyond}
Cartis, C., Gould,  N. I. M., Toint,  Ph. L.:
\newblock Second-order optimality and beyond: characterization and evaluation
  complexity in convexly-constrained nonlinear optimization.
\newblock Report naXys-06-2016, Dept of Mathematics, UNamur, Namur (B).
\newblock  (2016)

\bibitem{birginsiam}
Cartis, C., Gould,  N. I. M., Toint,  Ph. L.:
  Toint.
\newblock Evaluation complexity for smooth  constrained optimization using
  scaled {KKT} conditions and high-order models. Online at: http://perso.fundp.ac.be/~phtoint/pubs/NTR-11-2015-R1.pdf (2015)
  
\bibitem{tointsecond2}
Cartis, C., Gould,  N. I. M., Toint,  Ph. L.:
\newblock Complexity bounds for second-order optimality in unconstrained
  optimization.
\newblock {\em Journal of Complexity}, 28(1):93 -- 108 (2012)

\bibitem{chenpenalty}
Chen, X., Lu, Z.,  Pong, T. K.:
\newblock Penalty methods for a class of non-lipschitz optimization problems.
\newblock {\em SIAM Journal on Optimization}, 26(3):1465--1492 (2016)

\bibitem{chenye}
Chen, X.,  Xu, F.,  Ye, Y.:
\newblock Lower bound theory of nonzero entries in solutions of
  $\ell_2$-$\ell_p$ minimization.
\newblock {\em SIAM Journal on Scientific Computing}, 32(5):2832--2852 (2010)

\bibitem{curtis}
Curtis, F. E.,  Robinson, D. P.,  Samadi, M.:
\newblock A trust region algorithm with a worst-case iteration complexity of
  ${O}(\varepsilon^{-3/2})$ for nonconvex optimization.
\newblock {\em Mathematical Programming},
\newblock DOI: 10.1007/s10107-016-1026-2  (2016)

\bibitem{FanandLi2001}
Fan, J., Li, R.:
\newblock Variable selection via nonconcave penalized likelihood and its oracle
  properties.
\newblock {\em J. Amer. Statist. Assoc.}, (96):1348--1360 (2001)

\bibitem{FanandLv2011}
Fan, J., Lv, J.:
\newblock Nonconcave penalized likelihood with {NP}-dimensionality.
\newblock {\em IEEE Trans. Inform. Theory}, (57):5467--5484 (2011)

\bibitem{FanLvQireview}
Fan, J.,  Lv, J.,  Qi, L.:
\newblock Sparse high dimensional models in economics.
\newblock {\em Annu. Rev. Econom}, (3):291--317 (2011)

\bibitem{Fanetal2012}
Fan, J.,  Xue, L., Zou, H.:
\newblock Strong oracle optimality of folded concave penalized estimation.
\newblock {\em Ann. Stat.}, 3(42):819--849 (2014)

\bibitem{fiacco}
Fiacco, A. V., McCormick, G. P.:
\newblock {\em Nonlinear Programming: Sequential Unconstrained Minimization
  Techniques}.
\newblock Wiley (1968)

\bibitem{gay}
Gay, D. M.,
\newblock Computing optimal locally constrained steps.
\newblock {\em SIAM Journal on Scientific and Statistical Computing},
  2(2):186--197 (1981)

\bibitem{grapiglia}
Grapiglia, G. N., Yuan, J.,   Yuan, Y.:
\newblock On the convergence and worst-case complexity of trust-region and
  regularization methods for unconstrained optimization.
\newblock {\em Mathematical Programming}, 152(1):491--520 (2015)

\bibitem{grapiglia2}
Grapiglia, G. N., Yuan, J.,  Yuan, Y.:
\newblock Nonlinear stepsize control algorithms: Complexity bounds for first-
  and second-order optimality.
\newblock {\em Journal of Optimization Theory and Applications},
  171(3):980--997 (2016)

\bibitem{trustregion}
Gratton, S.,  Sartenaer, A., Toint,  Ph.~L.:
\newblock Recursive trust-region methods for multiscale nonlinear optimization.
\newblock {\em SIAM Journal on Optimization}, 19(1):414--444 (2008)

\bibitem{cakkt2}
Haeser, G.:
\newblock A second-order optimality condition with first- and second-order
  complementarity associated to global convergence of algorithms.
\newblock {\em Optimization Online} (2016)

\bibitem{deepcompression}
Han, S., Pool, J., Tran, J., Dally,  W.~J.:
\newblock Learning both Weights and Connections for Efficient
Neural Networks.
\newblock In {\it Advances in Neural Information Processing Systems}, pages 1135--1143, (2015)

\bibitem{Jahn}
Jahn, J.: {\it Introduction to the Theory of Nonlinear Optimization}, Springer (2007)


\bibitem{Karmarkar84}
Karmarkar. N.,
\newblock A new polynomial-time algorithm for linear programming.
\newblock {\em Combinatorica}, (4):373--395 (1984)

\bibitem{liu}
Liu, H., Yao, T.,  Li, R., Ye, Y.:
\newblock Folded concave penalized sparse linear regression: Sparsity,
  statistical performance, and algorithmic theory for local solutions,
Online at: http://web.stanford.edu/~yyye/FoldedPenalty.pdf

\bibitem{liuma}
 Liu, Y.-F., Ma, S., Dai, Y.-H., Zhang, S.:
\newblock A smoothing {SQP} framework for a class of composite $l_q$
  minimization over polyhedron.
\newblock {\em Mathematical Programming}, 158(1):467--500 (2016)

\bibitem{LohandWainwright}
Loh, P.-L.,  Wainwright, M. J.:
\newblock Regularized {M}-estimators with nonconvexity: Statistical and
  algorithmic theory for local optima.
\newblock {\em J. Mach. Learn. Res.}, (16):559--616 (2015)

\bibitem{martinez}
Mart\'{\i}nez, J. M.,  Raydan, M.:
\newblock Cubic-regularization counterpart of a variable-norm trust-region
  method for unconstrained minimization.
\newblock {\em Journal of Global Optimization}
\newblock DOI: 10.1007/s10898-016-0475-8 (2016)

\bibitem{Negahbanetal}
Negahban, S. N., Ravikumar, P., Wainwright, M. J.,  B.~Yu, B.:
\newblock A unified framework for high-dimensional analysis of {M}-estimators
  with decomposable regularizers.
\newblock {\em Statist. Sci.}, 4(27):538--557 (2012)

\bibitem{nesterovbook}
Nesterov, Y.:
\newblock {\em Introductory Lectures on Convex Optimization}.
\newblock Springer Science+Business Media, LLC (2004)

\bibitem{Nesterov-Polyak}
Nesterov, Y.,   Polyak, B. T.:
\newblock Cubic regularization of newton method and its global performance.
\newblock {\em Mathematical Programming}, 108(1):177--205 (2006)



\bibitem{Variational analysis}
Rockafellar, R. T.,  Wets, R. J. B.: {\em Variational Analysis.} Springer (1998)


\bibitem{tointnsc}
Toint, Ph. L.:
\newblock Nonlinear stepsize control, trust regions and regularizations for
  unconstrained optimization.
\newblock {\em Optimization Methods and Software}, 28(1):82--95 (2013)


\bibitem{sorensen}
Sorensen, D. C.:
\newblock Newton's method with a model trust region modification.
\newblock {\em SIAM Journal on Numerical Analysis}, 19(2):409--426 (1982)

\bibitem{vavasis}
 Vavasis, S. A.,  Zippel, R.:
\newblock Proving polynomial time for sphere-constrained quadratic programming.
\newblock Technical report, Department of Computer Science, Cornell University,
  90-1182 (1990)

\bibitem{Wangetal2013ultrahigh}
Wang, L., Kim, Y., Li, R.:
\newblock Calibrating nonconvex penalized regression in ultra-high dimension.
\newblock {\em Ann. Stat.}, 5(41):2505--2536 (2013)

\bibitem{Wangetal2013}
Wang, Z., Liu, H., Zhang, T.:
\newblock Optimal computational and statistical rates of convergence for sparse
  nonconvex learning problems.
\newblock {\em Ann. Stat.}, 6(42):2164--2201 (2014)


\bibitem{ye1992affine}
Ye, Y.:
\newblock On affine scaling algorithms for nonconvex quadratic programming.
\newblock {\em Mathematical Programming}, 56(1-3):285--300 (1992)

\bibitem{Ye98}
Ye, Y.:
\newblock On the complexity of approximating a {KKT} point of quadratic
  programming.
\newblock {\em Mathematical Programming}, 80(2):195--211 (1998)

\end{thebibliography}

\end{document}